\newcommand{\ybold}{\boldsymbol{y}}
\newcommand{\xbold}{\boldsymbol{x}}
\newcommand{\sbold}{\boldsymbol{s}}
\newcommand{\betabold}{\boldsymbol{\beta}}
\newcommand{\epsilonbold}{\boldsymbol{\epsilon}}
\newcommand{\zbold}{\boldsymbol{z}}
\newcommand{\bbold}{\boldsymbol{b}}
\newcommand{\phibold}{\boldsymbol{\phi}}
\newcommand{\hbold}{\boldsymbol{h}}
\newcommand{\E}{\mathbb{E}}
\newcommand{\Cov}{\mathrm{Cov}}
\newcommand{\tr}{\mathrm{tr}}
\newcommand{\argmin}{\mathrm{argmin}}
\newcommand\independent{\protect\mathpalette{\protect\independenT}{\perp}}
\def\independenT#1#2{\mathrel{\rlap{$#1#2$}\mkern2mu{#1#2}}}
\newtheorem{theorem}{Theorem}[section]
\newtheorem{definition}{Definition}[section]
\newtheorem{remark}{Remark}[section]
\newcommand{\blind}{1}
\title{Cross Validation for Correlated Data}
\date{April 05, 2020}
\begin{document}
\def\spacingset#1{\renewcommand{\baselinestretch}%
{#1}\small\normalsize} \spacingset{1}

%\maketitle

\if1\blind
{
  \title{\bf Cross-Validation for Correlated Data}
  \author{Assaf Rabinowicz \thanks{
    The authors gratefully acknowledge \textit{Israel Science Foundation, grant 1804/16}}\hspace{.2cm}\\
    Department of Statistics, Tel-Aviv University, Tel-Aviv, Israel, 69978\\
    \\
    Saharon Rosset\\
    Department of Statistics, Tel-Aviv University, Tel-Aviv, Israel, 69978}
  \maketitle
} \fi

\if0\blind
{
  \bigskip
  \bigskip
  \bigskip
  \begin{center}
    {\LARGE\bf Cross-Validation for Correlated Data}
\end{center}
  \medskip
} \fi

\bigskip

% \maketitle
% \tableofcontents

\begin{abstract}
K-fold cross-validation (CV) with squared error loss is widely used for evaluating predictive models, especially when strong distributional assumptions cannot be taken. However, CV with squared error loss is not free from distributional assumptions, in particular in cases involving non-i.i.d. data. This paper analyzes CV for correlated data. We present a criterion for suitability of standard CV in presence of correlations. When this criterion does not hold, we introduce a bias corrected cross-validation estimator which we term $CV_c,$ that yields an unbiased estimate of prediction error in many settings where standard CV is invalid. We also demonstrate our results numerically, and find that introducing our correction substantially improves both, model evaluation and model selection in simulations and real data studies. 
\end{abstract}

\noindent%
{\textbf{Keywords}: \it Prediction Error Estimation; Model Selection; Dependent Data; Linear Mixed Model; Gaussian Process Regression}
\vfill

\newpage
\spacingset{1.45}

\section{INTRODUCTION}
% \subsubsection{Paragraph- Correlated data}
% \textbf{(The paragraph titles are for internal use.)(format issue such as parentheses, vector in bold etc. will be addressed once we will select a journal.)}\\
Datasets with correlation structures are common in modern statistical applications in various fields, such as geostatistics \citep{goovaerts1999geostatistics}, genetics \citep{maddison1990method} and ecology \citep{roberts2017cross}. Different modeling methods address the correlation structure differently. 
Some modeling methods, such as Gaussian process regression \cite[GPR]{rasmussen2006gaussian} and generalized least squares \citep[GLS]{hansen2007generalized}, utilize explicitly the correlation structure for achieving better prediction accuracy. Other predictive models, like random forest \citep[RF]{breiman2001random}, gradient boosting machines \citep[GBM]{friedman2002stochastic} and other machine learning models, do not consider explicitly the correlation structure but are still potentially able to utilize the correlation implicitly. 
% \subsubsection{Paragraph- Correlation due to latent variable, random intercept LMM as an example}
The analysis in this paper mainly focuses on correlation that appears due to latent objects, such as random effects and random fields as appear in generalized linear mixed models \citep[GLMM]{verbeke1997linear} and generalized Gaussian process regression \citep[GGPR]{rasmussen2006gaussian} in clustered, temporal and spatial datasets. A simple example that demonstrates the way that latent variable realizations affect the correlation structure is linear mixed models \citep[LMM]{verbeke1997linear} with random intercept for clustered data:
\begin{gather} \label{LMM clustered}
y_{i,j}=\phibold_{i,j}^{t}\betabold+b_i+\epsilon_{i,j},\;i\in\{1,...,I\},\;j\in\{1,...,n_i\},
% b_j\in\mathbb{R}^{q}\sim N(0,G),\;\epsilon_{i,j}\sim N(0,\sigma_\epsilon),
\end{gather}
where $y_{i,j}\in\mathbb{R}$ is the $j^{th}$ observation for cluster $i,$ $\phibold_{i,j}\in\mathbb{R}^{p}$ are the observed fixed effects covariates, $\betabold\in\mathbb{R}^{p}$ is the fixed effects coefficients vector, $b_i\sim N(0,\sigma_b^{2})$ are independent random effects and $\epsilon_{i,j}\sim N(0,\sigma_\epsilon^{2})$ are the i.i.d. errors. Since all the observations in cluster $i$ share the same random effect realization, they are correlated. For more information see \cite{verbeke1997linear}.

% \subsubsection{Paragraph- Prediction in context of correlation between training and prediction sets. Examples for new and same random effect realizations}
When it comes to prediction, the question of whether there is correlation between the observations from the \textit{training set} and the \textit{prediction set} --- the sample that is used for the model's parameter estimation, and the set of points whose response is predicted based on the trained model, respectively --- plays an important role. For example, in Eq. (\ref{LMM clustered}), if the training and prediction sets are sampled from the same clusters, then once the random effect realizations are estimated in the model training, they can be utilized for better prediction accuracy of the dependent variable in the prediction set. Under some conditions, a predictor that uses the estimated random effect realizations is the best linear unbiased predictor (BLUP), for more information see \cite{harville1976extension}. Another scenario is when the training and prediction sets are sampled from different clusters. In this scenario, observations of the prediction set are not correlated with observations of the training set, and therefore estimating the random effect realizations of the training set cannot be utilized for achieving better prediction accuracy. Of course, there are other correlation settings, e.g., when the random effect realizations of the training and the prediction sets are not the same but correlated. In addition to the distributional settings that are covered by GLMM and GGPR, explicit and implicit utilization of the correlation between the training and prediction sets is common in other distributional settings in various applications, including applications involving spatial datasets \citep{ward2018spatial}, longitudinal datasets \citep{hand2017practical} and datasets with hierarchical clustering structure \citep{raudenbush2002hierarchical}.

% \subsubsection{Paragraph-
% prediction error estimation when the training and prediction sets are correlated}
The correlation structure of the training and prediction sets, and in particular the correlation \textit{between} the training and prediction sets, can affect the model's prediction error and therefore should be carefully addressed when estimating the prediction error. Since many model selection procedures are based on prediction error estimation, ignoring the correlation setting may also affect model selection decision.

% \subsubsection{Paragraph- The scope of this paper}
Various prediction error measures have been used and analyzed under different correlation settings, e.g., $AIC$-type \citep{vaida2005conditional} and $C_p$-type \citep{hodges2001counting}. This paper focuses on K-fold cross-validation prediction error estimator \cite[CV]{stone1974cross} which is the most widely used method for estimating prediction error \citep{Hastie2009elements}.
We introduce a new perspective on CV and propose an applicable framework for analyzing how CV is affected by the correlation structure in various distributional settings. In addition, a bias corrected cross-validation measure, $CV_{c},$ is introduced. $CV_{c}$ is suitable for many scenarios where CV is biased due to correlations.

Section \ref{section 2} presents the setting of the problem, the theoretical results of the proposed approach, and comparison with other methods. Section \ref{numerical part} presents numerical analyses that support the theoretical results of Section \ref{section 2}.

\section{CV FOR CORRELATED DATA}\label{section 2}
\subsection{K-fold Estimator and Generalization Error}\label{Setting}
% \subsubsection{Paragraph- Sample setting}
Let $\{\xbold_{i}\}_{i=1}^{n},\;\xbold_{i}\in \mathbb{R}^p,$ be an independent and identically distributed (i.i.d.) sample form the probability distribution function $P_{\xbold}$. Let $\{y_i\}_{i=1}^{n},\;y_i\in \mathbb{R},$ be a sample that is drawn independently from the distribution $P_{y|\xbold=\xbold_{i},\sbold=\sbold_{0}},$ where $\sbold_{0}\in \mathbb{R}^q$ is the latent variable realization that induces correlation structure between $y_i\,'s.$

Denote $T=\{y_i,\xbold_{i}\}_{i=1}^{n}$ as the training sample for fitting a predictive model. Also denote $X=\left[\begin{array}{c}
-\;\boldsymbol{x}_{1}^{t}-\\
\vdots\\
-\;\boldsymbol{x}_{n}^{t}-
\end{array}\right]$ and $\ybold=\left[\begin{array}{c}
y_1\\
\vdots\\
y_n
\end{array}\right].$\\ 

% \textbf{(Should we add a sampling diagram?)}

One example of this setting is linear mixed models (LMM):
\[
\ybold=\Phi\betabold+Z\sbold+\epsilonbold,
\]
where $\Phi\in\mathbb{R}^{n\times (p-q)}$ contains the fixed effects covariates, $Z\in\mathbb{R}^{n\times q}$ contains the random effects covariates, $\sbold\in\mathbb{R}^{q}$ is a normally distributed random effects vector with a general covariance matrix and $\epsilonbold\in\mathbb{R}^n$ is the normally distributed error term with a general covariance matrix. In this setting $X=\{\Phi,Z\}.$ Eq. (\ref{LMM clustered}) is a special case of this setting. Extensions of this model are generalized linear mixed models \cite[GLMM]{breslow1993approximate}, where $\ybold=g^{-1}(\Phi\betabold+Z\sbold+\epsilonbold)$ and $g$ is the link function, as well as hierarchical generalized linear models \cite[HGLM]{lee1996hierarchical} where $\sbold$ and $\epsilonbold$ do not necessarily follow the normal distribution. Other examples of this setting that are commonly analyzed and represented using graphical probabilistic models tools are hidden Markov models and mixture models \citep{jordan2004graphical}.

% \subsubsection{Paragraph- CV}
Once a model is fitted, it is natural to evaluate the prediction ability of the model. In CV prediction error estimator, $T$ is randomly partitioned into K folds, $T_1=\{y_i,\xbold_{i}\}_{i=1}^{n_1},$ $T_2=\{y_i,\xbold_{i}\}_{i=(n_1+1)}^{n_2},...,T_K=\{y_i,\xbold_{i}\}_{i=(n_{K-1}+1)}^{n_K},$\footnote{We assume the observations are randomly ordered, so the folds are exchangeable.} where $n_k-n_{k-1}$ is the sample size in fold $k$ and $n_K=n.$ For each $k\in\{1,...,K\},$ the model is trained on the entire data except the $k^{th}$ fold, denoted by $T_{-k}=\bigcup_{j\neq k}T_j=\{\ybold_{-k},X_{-k}\}.$ The prediction error of the trained model is then measured on the holdout fold, $T_{k}=\{\boldsymbol{y}_{k},X_{k}\},$ with respect to some loss function $L(\cdot\,,\cdot):\;\mathbb{R}\times\mathbb{R}\to \mathbb{R}.$ Thus, each fold does not train the model used for predicting its outcome. The CV prediction error estimator is calculated  by averaging out the estimated prediction error across all the folds, i.e.,
\[
CV=\frac{1}{n}\sum_{k=1}^{K}\sum_{i\in k^{th}\,fold}L\big(y_i,\hat{y}(\xbold_{i};T_{-k})\big),
\]
where $\hat{y}(\xbold_{i};T_{-k})$ is the predictor of $y_{i},$ constructed by training with $T_{-K}$ and predicting on $\xbold_{i}.$ A special case of CV is leave one out CV \citep[LOO]{stone1974cross}, which is defined by setting $K=n,$ i.e., each observation defines a fold and therefore the model is trained on $n-1$ observations and its prediction error is evaluated on a single observation. Under some conditions, LOO is superior to other CV variants in terms of minimizing asymptotic bias and instability (for more information see \cite{burman1989comparative,arlot2010survey}), but can be computationally prohibitive in some settings. A comprehensive comparison of CV variants in terms of prediction error estimation and models selection can be found in \cite{zhang2015cross}.

% \subsubsection{Paragraph- Generalization error}
% CV is considered to be an estimator of the generalization error, i.e., of:
In case of i.i.d. sampling, CV is considered to be an estimator of the generalization error, which is defined here in a wide sense that also covers settings with correlated data:
\begin{align}\label{generalization error}
generalization\; error=\E_{T_{te},T_{tr}}\frac{1}{n_{te}}\sum_{i=1}^{n_{te}}L\big(y_{te,i},\Hat{y}(\xbold_{te,i};T_{tr})\big),
\end{align}
where 
\begin{itemize}
    \item $T_{tr}=\{\ybold_{tr},X_{tr}\}=\{y_{tr,i},\xbold_{tr,i}\}_{i=1}^{n_{tr}}$
    \begin{itemize}
        \item $\xbold_{tr,i}$ is an i.i.d. sample from $P_{\xbold}$
        \item $y_{tr,i}$ is sampled from $P_{y|\xbold=\boldsymbol{x}_{tr,i},\sbold=\sbold_{0}}$
    \end{itemize}
    \item $T_{te}=\{\boldsymbol{y}_{te},X_{te}\}=\{y_{te,i},\xbold_{te,i}\}_{i=1}^{n_{te}}$
    \begin{itemize}
        \item $\xbold_{te,i}$ is an i.i.d. sample from $P_{\xbold}$
        \item $y_{te,i}$ is sampled from $P_{y|\xbold=\xbold_{te,i},\sbold=\sbold_{te}}$
    \end{itemize}
   \end{itemize}
%   \textbf{(Since the notation $'te'$ and $'tr'$ are similar, the reading may not be smooth enough. I think using $'test'$ is better than using $'te'.$ Alternatively we can use  $'t'$ and $'p'.$ What do you think?)}\\
% \subsubsection{Paragraph- Connection between CV to generalization error}
The reason that CV is commonly considered to be an estimator of the generalization error is the random mechanism that is embedded in the CV procedure.
In this perspective, $T_k$ and $T_{-k}$ are equivalent to $T_{te}$ and $T_{tr},$ respectively. Then, averaging the prediction error, $\sum_{i\in k^{th}\,fold} L\big(y_i,\hat{y}(\xbold_{i};T_{-k})\big)/\big(n_k-n_{k-1}\big),$ over the different folds, estimates Eq. (\ref{generalization error}). 

It is important to emphasize that unlike $T,$ which is the available dataset for modeling, $\{T_{tr},T_{te}\}$ are used in the paper for demonstrating different prediction goals, rather than actual datasets.

\begin{remark}
Note, it is typically assumed that $T_{tr}$ is distributed as $T$ and therefore of size $n.$ In this case, the size of $T_{-k}$ and $T_{tr}$ is obviously different, and additional bias in CV evaluation is introduced. This is typically ignored, especially when considering LOO and assuming training with $n-1$ or $n$ observations carries little difference. In what follows we also ignore this and implicitly assume that $T_{tr}$ is of the same size as $T_{-k}$ for all $k$.
\end{remark}
% \textbf{(At this stage the reader may expect to get already to the main result. Since the results are not based on this remark I would suggest to remove it (or placing it in the discussion/footnote).}
% \subsubsection{Paragraph- Next subsection and general settings}

When
\begin{align} \label{ind CV}
\{\ybold_{k},X_{k}\}\independent \{\ybold_{-k},X_{-k}\}, \;\forall k\in \{1,...,K\}
\end{align}
and
\[
\{\boldsymbol{y}_{te},X_{te}\}\independent \{\ybold_{tr},X_{tr}\},
\]
it is clear that 
CV is an unbiased estimator of the generalization error, however a careful analysis is required for the case when the folds are dependent.

Next we investigate how deviating from the condition in Eq. (\ref{ind CV}) contributes a bias to CV with respect to the generalization error. Based on it, a bias corrected CV estimator, $CV_c,$ will be presented. 

In what follows we limit the discussion to squared error loss function. Also note, from now on LOO setting will be assumed, and consequently $n_{te}=1, n_{tr}=n-1,$ however the results are valid to other CV partitioning settings. In particular, it can easily be seen that the size of the test set, $n_{te},$ has no bearing on generalization error definition.

% Note, while the generalization error is a decreasing function of $n_{tr},$ it does not depend on $n_{te}.$

% Next, we would like to challenge this perception, that CV estimator is an estimator of the generalization error and to present a different point of view on CV estimator. This point of view would allow us to develop a corrected CV estimator that would be a unified prediction error estimator for various settings that are involved different correlation structures.

\subsection{A General Formulation of CV Bias}
% \subsubsection{Paragraph- Presenting $CV_c$}
% Below, the corrected CV estimator, $CV_c,$ will be defined. Then, explicit derivations and approximations of $CV_c$ for different correlation settings will be presented and analyzed.
Let
\begin{align*}
w_{cv}=\E_{T_{te},T_{tr}}\big(y_{te}-\Hat{y}(\xbold_{te};T_{tr})\big)^2-\frac{1}{n}\E_{T}\|\ybold-\Hat{\ybold}_{cv}(T)\|^2,
\end{align*}
where $\Hat{\ybold}_{cv}(T)=\big(\hat{y}(\xbold_1;T_{-1}),...,\hat{y}(\xbold_n;T_{-n})\big)^{t}$ is the CV predictor of $\ybold.$

An unbiased estimator of the generalization error is:
\begin{align}\label{CV Optimism}
CV_c=CV+w_{cv}.
\end{align}
% This perspective suggests to consider CV as a \textit{training error} which should be accompanied by a bias correction, $w_{cv}.$ \subsubsection{Paragraph- Presenting $w_{cv}$ explicitly}
Before analyzing different correlation settings, we derive a more explicit expression for $w_{cv}.$ For simplicity, subscript notations in the expectation operator are frequently omitted. Therefore, unless a specific object is specified, $\E$ averages all the random variables that it operates on.

% \begin{align*}
%     \E\|y_{te}-\Hat{y}(X_{te};T_{tr})\|^2=&\E\|y_{te}-\E y_{te}\|^2+\|\E y_{te}-\E\Hat{y}(X_{te};T_{tr})\|^2+\E\|\E\Hat{y}(X_{te};T_{tr})- \Hat{y}(X_{te};T_{tr})\|^2\\
%     &+2\E(y_{te}-\E y_{te})'(\E y_{te}-\E\Hat{y}(X_{te};T_{tr}))\\
%     &+2\E(y_{te}-\E y_{te})'(\E\Hat{y}(X_{te};T_{tr})- \Hat{y}(X_{te};T_{tr}))\\
%     &+2\E(\E y_{te}-\E\Hat{y}(X_{te};T_{tr}))'(\E\Hat{y}(X_{te};T_{tr})- \Hat{y}(X_{te};T_{tr}))\\
%     =&\E\|y_{te}-\E y_{te}\|^2+\|\E y_{te}-\E\Hat{y}(X_{te};T_{tr})\|^2+\E\|\E\Hat{y}(X_{te};T_{tr})- \Hat{y}(X_{te};T_{tr})\|^2\\
%     &-2\E(y_{te}-\E y_{te})'( \Hat{y}(X_{te};T_{tr})-\E\Hat{y}(X_{te};T_{tr}))
% \end{align*}
Since
\begin{align*}
    \E\big(y_{te}-\Hat{y}(\xbold_{te};T_{tr})\big)^2=&\E\big(y_{te}-\E\,y_{te}\big)^2+\big(\E\,y_{te}-\E\,\Hat{y}(\xbold_{te};T_{tr})\big)^2\\
    &+\E\big(\E\,\Hat{y}(\xbold_{te};T_{tr})-\Hat{y}(\xbold_{te};T_{tr})\big)^2\\
    &+2\E\big(y_{te}-\E\,y_{te}\big)\big( \E\,\Hat{y}(\xbold_{te};T_{tr})-\Hat{y}(\xbold_{te};T_{tr})\big),\\
     \\
     \E\|\ybold-\Hat{\ybold}_{cv}(T)\|^2=&\E\|\ybold-\E\,\ybold\|^2+\|\E\,\ybold-\E\,\Hat{\ybold}_{cv}(T)\|^2+\E\|\E\,\Hat{\ybold}_{cv}(T)-\Hat{\ybold}_{cv}(T)\|^2\\
    &+2\E\big(\ybold-\E\,\ybold\big)^{t}\big(\E\,\Hat{\ybold}_{cv}(T)-\Hat{\ybold}_{cv}(T)\big),
\end{align*}
and $y_i$ and $y_{te}$ have the same marginal distribution, which gives $\E\|\ybold-\E\,\ybold\|^2/n-\E\big(y_{te}-\E\,y_{te}\big)^2=0,$ then:
\begin{align}\label{w_{cv} general derivation}
w_{cv}=&\big(\E\,\Hat{y}(\xbold_{te};T_{tr})-\E\,y_{te}\big)^2-\frac{1}{n}\|\E\,\Hat{\ybold}_{cv}(T)-\E\,\ybold\|^2\\ \nonumber
&\E\big(\Hat{y}(\xbold_{te};T_{tr})-\E\,\Hat{y}(\xbold_{te};T_{tr})\big)^2-\frac{1}{n}\E\|\Hat{\ybold}_{cv}(T)-\E\,\Hat{\ybold}_{cv}(T)\|^2\\ \nonumber
&-2 \E\,\Cov\big(\Hat{y}(\xbold_{te};T_{tr}),y_{te}\big)+\frac{2}{n}\E\,\tr\big[\Cov\big(\Hat{\ybold}_{cv}(T),\ybold\big)\big],
\end{align}
where $\tr$ is the trace operator and $\Cov$ is the covariance operator which contains conditional expectation of the dependent variables $\ybold,\;y_{te}$ and $\ybold_{tr}$ given their covariates, $X,\;\xbold_{te}$ and $X_{tr}$ e.g., $\E\,\tr\big[\Cov\big(\Hat{\ybold}_{cv}(T),\ybold\big)\big]\coloneqq\E_{X}\tr\big[\Cov\big(\Hat{\ybold}_{cv}(T),\ybold|X\big)\big].$

% $\E tr\Cov(\Hat{\ybold}_{cv}(T),\ybold)=\E_{X}tr(\E_{y|X}(\Hat{\ybold}_{cv}(T)-\E_{y|X}\Hat{\ybold}_{cv}(T))^{t}(\ybold-\E_{y|X} y)).$

The first two lines in Eq. (\ref{w_{cv} general derivation}) are the differences between the bias and the variance of $\Hat{\ybold}_{cv}(T)$ and $\Hat{y}(\xbold_{te};T_{tr}),$ where the expectation is taken also over the covariates. The third line relates to the covariances between the response and its predictor in each scheme --- CV prediction error and generalization error.

% Below, analyses of $w_{cv}$ for different correlation settings are presented. First, relatively simple correlation settings will be discussed, then more complicated settings, such as hierarchical clustering structure and correlation settings in functional data analysis framework, will be analyzed as well.

\subsection{Criterion for CV Unbiasedness}\label{section same s}
Let $P_{T_{te},T_{tr}}$ and $P_{T_{k},T_{-k}}$ be the joint distributions of $\{T_{te},T_{tr}\}$ and $\{T_{k},T_{-k}\},$ respectively. Theorem \ref{wcv for same s} describes a simple generic condition when no correction is required for CV.

% The first correlation setting is when the distributional relation between the joint distribution of $\{T_{te},T_{tr}\}$ is the same as the joint distributional of $\{T_{k},T_{-k}\},\;\forall k\in\{1,...,n\},$ i.e., when $P_{T_{te},T_{tr}}=P_{T_{k},T_{-k}}.$ $w_{cv}$ for this setting is given by Theorem \ref{wcv for same s}. 
\begin{theorem}\label{wcv for same s}
If $P_{T_{te},T_{tr}}=P_{T_{k},T_{-k}}\;\forall
k\in \{1,..,n\},$ then $w_{cv}=0.$
\end{theorem}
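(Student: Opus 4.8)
The plan is to exploit the fact that, under the leave-one-out regime, the squared norm defining the CV term decomposes additively over the individual folds, and that each summand is the expectation of exactly the same loss functional that appears in the generalization error term. First I would introduce the fixed squared-error loss functional $g(\{y,\xbold\},\mathcal{D})=\big(y-\Hat{y}(\xbold;\mathcal{D})\big)^2$, where $\Hat{y}(\cdot\,;\mathcal{D})$ denotes the model produced by the fitting algorithm trained on a dataset $\mathcal{D}$ and evaluated at the labelled point $\{y,\xbold\}$. With this notation the generalization error piece is simply $\E_{T_{te},T_{tr}}\,g(T_{te},T_{tr})$. Since $\Hat{\ybold}_{cv}(T)=\big(\Hat{y}(\xbold_1;T_{-1}),\dots,\Hat{y}(\xbold_n;T_{-n})\big)^{t}$, expanding the Euclidean norm coordinatewise gives
\[
\frac{1}{n}\E_T\|\ybold-\Hat{\ybold}_{cv}(T)\|^2=\frac{1}{n}\sum_{k=1}^{n}\E_T\big(y_k-\Hat{y}(\xbold_k;T_{-k})\big)^2=\frac{1}{n}\sum_{k=1}^{n}\E\,g(T_k,T_{-k}),
\]
so that $w_{cv}=\E_{T_{te},T_{tr}}\,g(T_{te},T_{tr})-\frac{1}{n}\sum_{k=1}^{n}\E\,g(T_k,T_{-k})$.

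The key step is then the elementary observation that the expectation of a fixed measurable functional depends on its arguments only through their joint law. Because $g$ is the same functional in both terms --- the fitting map $\Hat{y}(\cdot\,;\cdot)$ and the squared-error loss are identical whether the training set is written $T_{tr}$ or $T_{-k}$, and the training sizes agree in the LOO setting ($n_{tr}=n-1$, consistent with the convention in the remark) --- the hypothesis $P_{T_{te},T_{tr}}=P_{T_k,T_{-k}}$ immediately yields $\E\,g(T_k,T_{-k})=\E_{T_{te},T_{tr}}\,g(T_{te},T_{tr})$ for every $k$. Substituting this into the sum, each of the $n$ summands equals the generalization error term, the factor $1/n$ cancels the $n$ identical contributions, and $w_{cv}=0$ follows directly.

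The proof is therefore short and contains no genuinely hard computation; the only point requiring care is the justification that equality in distribution transfers to equality of the loss expectations. Concretely, I would verify that $g$ really is one fixed function of the pair $(\{y,\xbold\},\mathcal{D})$ --- i.e. that the predictor arises from a single learning algorithm applied to a training set of fixed size and queried at one test point --- so that the law-pushforward argument is automatic. Given this, no decomposition of $w_{cv}$ into the bias, variance and covariance pieces of Eq. (\ref{w_{cv} general derivation}) is needed; one could instead prove the claim by checking that each of the three paired lines there vanishes separately under the distributional hypothesis, but that would be a more laborious route to the same conclusion.
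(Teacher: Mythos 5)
Your proof is correct, and it takes a genuinely more direct route than the paper's. The paper proves Theorem \ref{wcv for same s} by working from the bias/variance/covariance decomposition in Eq. (\ref{w_{cv} general derivation}): it invokes the fact that ``an expectation over any transformation'' of identically distributed pairs is equal, and applies this three separate times to show that the squared-bias terms, the variance terms, and the covariance terms each cancel in pairs. You instead bypass the decomposition entirely: expanding $\E_T\|\ybold-\Hat{\ybold}_{cv}(T)\|^2$ coordinatewise, writing each summand as $\E\,g(T_k,T_{-k})$ for the fixed loss functional $g(\{y,\xbold\},\mathcal{D})=\big(y-\Hat{y}(\xbold;\mathcal{D})\big)^2$, and applying the law-pushforward argument once, directly to the definition of $w_{cv}$. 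The underlying lemma is identical (equality in distribution transfers to equality of expectations of a fixed measurable functional --- this is exactly the one-line justification the paper gives), but your placement of it buys two things: the proof is self-contained, not relying on the derivation of Eq. (\ref{w_{cv} general derivation}) (which itself uses the equal-marginals assumption to cancel the $\E\|\ybold-\E\,\ybold\|^2$ terms); and it makes transparent that the conclusion holds for any predictor and indeed any loss function, not just squared error. What the paper's longer route buys in exchange is continuity with the machinery reused immediately afterwards: Theorem \ref{EDF Theorem} is proved by showing that, for linear predictors, the bias and variance lines of Eq. (\ref{w_{cv} general derivation}) still cancel even when the distributional hypothesis fails, so that only the covariance line survives --- a structure your direct argument does not expose. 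Your closing observation that checking the three paired lines separately would be ``a more laborious route to the same conclusion'' is, in fact, a precise description of the proof the paper gives.
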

\begin{proof}$\;$
Since $\{T_{k},T_{-k}\}$ were drawn from the same distribution as $\{T_{te},T_{tr}\},$ then an expectation over any transformation of them is equal, in particular:
\[
 \big(\E\, y_{te}-\E\,\Hat{y}(\xbold_{te};T_{tr})\big)^2=\frac{1}{n}\|\E\,\ybold-\E\,\Hat{\ybold}_{cv}(T)\|^2.
\]
Similarly
\begin{align*}
\E\big(\Hat{y}(\xbold_{te};T_{tr})-\E\,\Hat{y}(\xbold_{te};T_{tr})\big)^2&=\frac{1}{n}\E\|\Hat{\ybold}_{cv}(T)-\E\,\Hat{\ybold}_{cv}(T)\|^2\\ 2 \E\,\Cov\big(\Hat{y}(\xbold_{te};T_{tr}),y_{te}\big)&=\frac{2}{n}\E\,\tr\big[\Cov\big(\Hat{\ybold}_{cv}(T),\ybold\big)\big].  
\end{align*}
\end{proof}
Theorem \ref{wcv for same s} states a very basic and intuitive condition of CV unbiasedness --- When the CV partitioning preserves the distributional relation between the prediction set to the training set, then CV is unbiased. 

Let $P_{y_k|\xbold_{k},T_{-k}},\;P_{y_{te}|\xbold_{te},T_{tr}}$ and $P_{y_{te}|\xbold_{te},T_{-k}}$ denote the respective conditional distribution. The condition in Theorem \ref{wcv for same s} can be compacted to the following one:
\[
P_{y_k|\xbold_{k},T_{-k}}=P_{y_{te}|\xbold_{te},T_{tr}}.
\]
Moreover, since $T_{tr}$ is assumed to be distributed as $T_{-k},$ then this condition can be rewritten as follows:
\begin{align}\label{condition}
P_{y_k|\xbold_{k},T_{-k}}=P_{y_{te}|\xbold_{te},T_{-k}}.
\end{align}
% The condition, as it is presented in Eq. (\ref{condition}), is more compact than other presentations (since it does not contain $T_{tr}$) and therefore will be used in the next sections.
% \subsubsection{Paragraph- CV and exchangeablity}

Of course, when $w_{cv}=0$ the CV is unbiased and therefore suitable. Based on Theorem \ref{wcv for same s}, it is important to stress that CV is suitable not only for the case that $\{y_k,\xbold_{k}\}_{k=1}^{n}$ are independent, neither only for the case when $\{y_k,\xbold_{k}\}_{k=1}^{n}$ are exchangeable, as is commonly mistakenly referred in the literature \citep{anderson2018comparing,roberts2017cross}. 
The biasedness of CV only relates to the question whether $T_{-k}$ contributes more information for predicting $y_{k}$ than $T_{tr}$ contributes for predicting $y_{te}.$

% \subsubsection{Paragraph- LMM example, same s}
We can demonstrate the use of Theorem \ref{wcv for same s} for a simple application --- using LMM for predicting new observations from the \textit{same} clusters that appear in the training set. In the case $\sbold_{te}=\sbold_{0},$ i.e., \begin{align}\label{LMM same s}
\ybold=&\Phi\betabold+Z\sbold_{0}+\epsilonbold\\\nonumber
\ybold_{tr}=&\Phi_{tr}\betabold+Z_{tr}\sbold_{0}+\epsilonbold_{tr}\\\nonumber
y_{te}=&\phibold_{te}\betabold+\zbold_{te}\sbold_{0}+\epsilon_{te},
\end{align}
and
\begin{align*}
X=&\{\Phi,Z\},\;\Phi\in\mathbb{R}^{n\times (p-q)},\;Z\in\mathbb{R}^{n\times q}\\
X_{tr}=&\{\Phi_{tr},Z_{tr}\},\;\Phi_{tr}\in\mathbb{R}^{(n-1)\times (p-q)},\;Z_{tr}\in\mathbb{R}^{(n-1)\times q}\\ \xbold_{te}=&\{\phibold_{te},\zbold_{te}\},\;\phibold_{te}\in\mathbb{R}^{1\times (p-q)},\;\zbold_{te}\in\mathbb{R}^{1\times q},\\
\end{align*}
$\sbold_{0}\in\mathbb{R}^{q}$ is the random effect realization vector, where each entry is a random effect realization for a different cluster and $\epsilonbold\in\mathbb{R}^{n},\;\epsilonbold_{tr}\in\mathbb{R}^{(n-1)},\;\epsilon_{te}\in\mathbb{R}$ are i.i.d. normal errors terms.

As was mentioned previously, the observations in $X,\;X_{tr}$ and $\xbold_{te}$ are i.i.d.. Also, in this example $\ybold,\;\ybold_{tr}$ and $y_{te}$ were drawn given the same latent variable realization, $\sbold_{0},$ therefore Theorem \ref{wcv for same s}'s condition --- $P_{y_k|\xbold_{k},T_{-k}}=P_{y_{te}|\xbold_{te},T_{-k}}$ --- is satisfied and $w_{cv}=0.$ This use case of predicting new points from the same clusters that were used in the training data is common, for example see \cite{gelman2006multilevel}.

% \subsubsection{Paragraph- $w_{cv}$ for ML, same s}

The principle of CV suitability for the setting in Eq. (\ref{LMM same s}) is discussed in the LMM literature \citep{fang2011asymptotic,little2017using}, however we did not find any general mathematical formalization of it. Commonly, CV is avoided in applications involving correlated data based on the wrong perception that CV is always unsuitable for these cases \citep{anderson2018comparing,roberts2017cross}. It is also important to stress that since the condition in Theorem \ref{wcv for same s}  only relates to the distributional relation between $\{T_{te},T_{tr}\}$ and $\{T_{k},T_{-k}\}$ rather than specifying a distribution, Theorem \ref{wcv for same s} can be implemented in applications where the distributional settings are not fully specified, as is common when implementing machine learning algorithms.

\subsection{CV Correction}\label{section new s}
% \subsubsection{Paragraph- LMM example for new s (unlike above, example before theorem)}
Now, consider the setting where Theorem \ref{wcv for same s}'s condition is not satisfied, i.e., when $P_{T_{te},T_{tr}}\neq P_{T_{k},T_{-k}}.$ A simple scenario for this setting can again be taken from LMM for clustered data, where $T_{te}$ contains a \textit{new} latent random effects realization i.e.,
\begin{align} \label{LMM for new b}
\ybold=&\Phi\betabold+Z\sbold_{0}+\epsilonbold\\\nonumber
\ybold_{tr}=&\Phi_{tr}\betabold+Z_{tr}\sbold_{0}+\epsilonbold_{tr}\\ \nonumber
y_{te}=&\phibold_{te}\betabold+\zbold_{te}\sbold_{te}+\epsilon_{te},
\end{align}
where $\sbold_{te}\neq \sbold_{0}.$ Other relevant components are defined in the same way as in Eq. (\ref{LMM same s}). In this scenario the correlation between $\ybold_{tr}$ and $y_{te}$ is different than the correlation between $\ybold_{-k}$ and $y_{k},$ and a further analysis of $w_{cv}$ is required. This can occur for example when the random effects are intercepts for clusters (e.g., cities), and the data was collected at one point of time, but the prediction task is performed on the same clusters at another point of time.

% \subsubsection{Paragraph- $H_{cv}$ and main theorem}

Here, first we will find an estimator of $w_{cv}$ when $\Hat{\ybold}_{cv}(T)$ is linear in $\ybold,$ then $w_{cv}$ for nonlinear predictors will be discussed as well. However, before that, let us demonstrate how $\Hat{\ybold}_{cv}(T)$ can be formalized linearly in $\ybold$ for some models.
\begin{definition}
$\Hat{\ybold}_{cv}(T)$ is linear in $\ybold$ if:
\begin{align}\label{Linear CV}
\Hat{\ybold}_{cv}(T)=H_{cv}\ybold,
\end{align}
where $H_{cv}\in\mathbb{R}^{n\times n}$ does not contain $\ybold$ and is constructed as follows:
\[
H_{cv}=\left[\begin{array}{cccc}
0 & h_{1,2} & ... & h_{1,n}\\
h_{2,1} & 0 &  & h_{2,n}\\
...\\
h_{n,1} & h_{n,2} & ... & 0
\end{array}\right],
\]
$h_{k,k'}\in\mathbb{R}\;\forall k,k'\in \{1,...,n\}.$ 
\end{definition}
The cross validation's principle that $y_{k}$ is not involved in predicting itself is reflected by having zeros in the diagonal, therefore $\hat{y}(\xbold_{k},T_{-k})=\hbold_k \ybold_{-k},$ where $\hbold_k\in\mathbb{R}^{1\times n-1}$ is the $k^{th}$ row of $H_{cv}$ without the diagonal element, e.g.,
\[
\hbold_1=[h_{1,2},...,h_{1,n}].
\]
Of course, $H_{cv}$ can be defined for other K-fold CV settings by adjusting the dimension of the blocks with the zeros on the diagonal (in LOO the dimension of each block is $1\times 1,$ however for general K-fold CV is $n/K\times n/K$).

Examples for linear models are ordinary least squares (OLS), generalized least squares (GLS), ridge regression, smoothing splines, LMM, Gaussian process regression (GPR) and kernel regression. 
\begin{definition}\label{htest}
Let $\hbold_{te}\in\mathbb{R}^{1\times n-1}$ be the hat vector of $y_{te},$ constructed from $\{X_{tr},\Cov(y_{te},\ybold_{tr})\}$ in the same way as $\hbold_{k}$ is constructed from $\{X_{-k},\Cov(y_{k},\ybold_{-k})\}.$
\end{definition}
By definition \ref{htest}, $\hbold_{te}$ is the equivalent of $\hbold_{k}$ for the set $\{T_{te},T_{tr}\}$ and they are distributed the same.
\begin{theorem}\label{EDF Theorem}
Let $\Hat{\ybold}_{cv}(T)=H_{cv}\ybold$ be a linear predictor of $\ybold,$ and $\Hat{y}_{te}(\xbold_{te},T_{tr})=\hbold_{te}\ybold_{tr}$ is its corresponding predictor of $y_{te}.$ Then:
\[
w_{cv}=\frac{2}{n}\E\big[\tr\Big(H_{cv}\Cov\big(\ybold,\ybold\big)\Big)-n\hbold_{te}\Cov\big(\ybold_{tr},y_{te}\big)\big].
\]
\end{theorem}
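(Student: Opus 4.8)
The plan is to begin from the general decomposition of $w_{cv}$ in Eq. (\ref{w_{cv} general derivation}) and show that, under the linearity hypothesis, the first two lines (the squared-bias difference and the variance difference) vanish, so that only the covariance line survives and collapses to the claimed formula. I would dispatch the covariance line first, since it is the part that produces the stated expression and the computation is immediate. Substituting $\Hat{\ybold}_{cv}(T)=H_{cv}\ybold$ and pulling the covariate-measurable matrix $H_{cv}$ out of the conditional covariance gives $\Cov\big(\Hat{\ybold}_{cv}(T),\ybold\big)=H_{cv}\Cov(\ybold,\ybold)$, hence $\tr\big[\Cov(\Hat{\ybold}_{cv}(T),\ybold)\big]=\tr\big[H_{cv}\Cov(\ybold,\ybold)\big]$; likewise $\Hat{y}(\xbold_{te};T_{tr})=\hbold_{te}\ybold_{tr}$ yields $\Cov\big(\Hat{y}(\xbold_{te};T_{tr}),y_{te}\big)=\hbold_{te}\Cov(\ybold_{tr},y_{te})$. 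Combining the two covariance contributions and factoring out $2/n$ reproduces the right-hand side of the theorem.

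The substance of the argument is therefore to show that the two remaining lines are zero, and I would do this through an identity in distribution: for every $k$, the pair $(\hbold_k,\ybold_{-k})$ and the pair $(\hbold_{te},\ybold_{tr})$ are identically distributed. This rests on two facts already in the setup. The covariates making up $X_{-k}$ and $X_{tr}$ are i.i.d. draws from $P_{\xbold}$, and by Definition \ref{htest} the vectors $\hbold_k$ and $\hbold_{te}$ are built by the same rule, so they share a common distribution. The within-training response vectors $\ybold_{-k}$ and $\ybold_{tr}$ have the same mean structure and the same within-training covariance — both being generated under the realization $\sbold_0$ — so they are identically distributed jointly with their hat vectors. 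Consequently $\E\,y_k=\E\,y_{te}$ and $\E[\hbold_k\ybold_{-k}]=\E[\hbold_{te}\ybold_{tr}]$, which makes each squared-bias summand equal to the single test-set squared bias; averaging the $n$ identical terms and dividing by $n$ annihilates the first line. The same identity in distribution gives $\Var(\hbold_k\ybold_{-k})=\Var(\hbold_{te}\ybold_{tr})$ for all $k$, so the averaged CV variance equals the generalization variance and the second line vanishes.

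The delicate point, and the one I would take care to state explicitly, is precisely why the covariance line is not killed by the same reasoning. The bias and variance lines depend on $\ybold_{-k}$ and $\ybold_{tr}$ only through their means and their within-training covariances, and those are common to the CV and generalization settings. The covariance line, by contrast, depends on the between covariances $\Cov(\ybold_{-k},y_k)$ and $\Cov(\ybold_{tr},y_{te})$, which genuinely differ: in CV the held-out response is generated under the same latent realization $\sbold_0$ as the retained data, whereas for the generalization task $y_{te}$ may carry a different realization $\sbold_{te}$. Keeping the hat vectors identically distributed while letting only these between covariances differ is exactly what isolates the correction term. I expect the main obstacle to be articulating this asymmetry cleanly — making transparent that the hat vectors enter as functions of the i.i.d. covariates alone, so that all of the discrepancy between $w_{cv}$'s two schemes is funneled into the surviving $\Cov(\ybold_{tr},y_{te})$ versus $\Cov(\ybold_{-k},y_k)$ terms — rather than any nontrivial calculation.
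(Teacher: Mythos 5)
Your proposal is correct and follows essentially the same route as the paper's proof: both start from the decomposition in Eq. (\ref{w_{cv} general derivation}), kill the squared-bias and variance lines by the identical-distribution argument for $(\hbold_k,\ybold_{-k})$ versus $(\hbold_{te},\ybold_{tr})$, and collapse the surviving covariance line via linearity into $\frac{2}{n}\E\big[\tr\big(H_{cv}\Cov(\ybold,\ybold)\big)-n\hbold_{te}\Cov(\ybold_{tr},y_{te})\big]$. Your added remark on why the covariance terms alone escape the symmetry argument is a clarification, not a departure.
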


% Theorem \ref{EDF Theorem} refers specifically to LOO setting, however it can be easily adjusted to any other K-fold setting, by adjusting $n$ and $H_{cv}$ correspondingly. 

% \textbf{(
% My main concern with $\mathbbm{1}_{n}^{t}(H_{cv}\Cov(\ybold,y_{te}))$ is that it loses the clear relation between the planned and actual.)}
\begin{proof}
Based on Eq. (\ref{w_{cv} general derivation}) and by assuming linear predictor, 
\[
-2 \E\,\Cov\big(\Hat{y}(\xbold_{te};T_{tr}),y_{te}\big)+\frac{2}{n}\E\,\tr\big[\Cov\big(\Hat{\ybold}_{cv}(T),\ybold\big)\big]=w_{cv}.
\]
Therefore, it is only left to show that
     \begin{align}\label{equal bias}
          \big(\E\,y_{te}-\E\,\hbold_{te}\ybold_{tr}\big)^2&=\frac{1}{n}\|\E\,\ybold-\E\,H_{cv}\ybold\|^2\\  \label{equal variance}
\E\big(\hbold_{te}\ybold_{tr}-\E\,\hbold_{te}\ybold_{tr}\big)^2&=\frac{1}{n}\E\| H_{cv}\ybold-\E\,H_{cv}\ybold\|^2. 
     \end{align}
     Since, by definition $y_{k}$ and $y_{te}$ were drawn from the same distortion, and
$\ybold_{-k}$ and $\ybold_{tr}$ were drawn from the same distribution, then although  $P_{T_{te},T_{tr}}\neq P_{T_{k},T_{-k}},$ still  
     \[
      \|\E\,\ybold-\E\,H_{cv}\ybold\|^2=n\big(\E\,y_{k}-\E\,\hbold_{k}\ybold_{-k}\big)^2=n\big(\E\,y_{te}-\E\,\hbold_{te}\ybold_{tr}\big)^2.
     \]
     Therefore Eq. (\ref{equal bias}) holds. Similarly with Eq. (\ref{equal variance}).
\end{proof}
% \textbf{(would it be clearer if we present the transition above with conditional expectations, i.e.,
% \begin{align*}
%       \|\E\ybold-\E_{X,y}H_{cv}\ybold\|^2&=
% n\|\E y_{k}-\E_{X}H_{k}\E_{y|X}\ybold\|^2=
%       n\|\E y_{te}-\E_{X_{te},X_{tr}} H_{te}\E_{\ybold_{tr}|X_{tr}}\ybold_{tr}\|^2\\
%       &=n\|\E y_{te}-\E_{y,_{tr}X_{te},X_{tr}} H_{te}\ybold_{tr}\|^2
% \end{align*}
% ? I'm not sure...)}

Using Theorem \ref{EDF Theorem}, given $\Cov\big(\ybold,\ybold\big)$ and $\Cov\big(\ybold_{tr},y_{te}\big),$ an estimator of the generalization error for a linear predictor is:
\begin{align}\label{Prediction error estimator}
\widehat{CV_c}=\frac{1}{n}\big(\ybold-H_{cv}\ybold\big)^{t}\big(\ybold-H_{cv}\ybold\big)+\frac{2}{n}\Big[\tr\Big(H_{cv}\Cov\big(\ybold,\ybold\big)\Big)-n\hbold_{te}\Cov\big(\ybold_{tr},y_{te}\big)\Big].
\end{align}
% \begin{remark}
% As was defined above, $\ybold_{-k}$ and $\ybold_{tr}$ are identically distributed since they have same latent variable realization, $s_{0}$, and $\Cov\big(\ybold_{-k},y_{te}\big)=\Cov\big(\ybold_{tr},y_{te}\big).$ Therefore, $\Cov\big(\ybold,y_{te}\big)$ in Eq. (\ref{Prediction error estimator}) can be also expressed by $\Cov\big(\ybold_{tr},y_{te}\big).$ We prefer using $\Cov\big(\ybold,y_{te}\big)$ in order to avoid notations inflation.
% \end{remark}

A special case is when $\Cov\big(\ybold_{tr},y_{te}\big)=0.$ For example, in the clustered LMM setting that was given in Eq. (\ref{LMM for new b}), $\Cov\big(\ybold_{tr},y_{te}\big)=0$ when the latent variable realizations of $\ybold_{tr}$ and $y_{te},$ i.e., $\sbold_{0}$ and $\sbold_{te},$ are independent. In this case:
\begin{align}\label{Prediction error estimator for independent s}
\widehat{CV_c}=\frac{1}{n}\big(\ybold-H_{cv}\ybold\big)^{t}\big(\ybold-H_{cv}\ybold\big)+
\frac{2}{n}\tr\big[H_{cv}\Cov\big(\ybold,\ybold\big)\big].
\end{align}

In case $\Hat{\ybold}_{cv}(T)$ is not a linear predictor and Theorem \ref{wcv for same s}'s condition does not hold, then there is no closed form for $w_{cv}.$ However, a correction is required as CV is not an unbiased estimator of the prediction error in this case. See Section \ref{Comparison with other methods} for some ad-hoc solutions that may apply in specific cases. 
% \begin{corollary}\label{New Realizations}
% When $\sbold_{te}$ and $\sbold$ are independent realizations of the latent variable $S,$ then
% \[
% w_{cv}=\E_{X}\frac{2}{n}tr(H_{cv}V).
% \]
% \end{corollary}
% In this case: 
% \begin{align}\label{Marginal prediction error estimator}
% \widehat{CV_c}=\frac{1}{n}(\ybold-H_{cv}\ybold)^{t}(\ybold-H_{cv}\ybold)+\frac{2}{n}tr(H_{cv}V).
% \end{align}

It is important to note that:
\begin{itemize}
    \item Theorem \ref{EDF Theorem}, as well as Eqs. (\ref{Prediction error estimator}) and (\ref{Prediction error estimator for independent s}), are valid for any K-fold setting and not only for LOO. Of course, $H_{cv}$ is based on $K$ and should be adjusted correspondingly.
    % \item In case the prediction goal is to estimate set of observations, $\ybold_{te}\in\mathbb{R}^{n*},$ then $n\hbold_{te}\Cov\big(\ybold_{tr},y_{te}\big)$ in Eq. (\ref{Prediction error estimator}) can be adjusted accordingly.
\item The correction term in $\widehat{CV_c}$ depends on covariance matrices, which are commonly estimated. The effect of using estimated covariance matrices instead of the true covariance matrices is discussed and analyzed in Sections \ref{Estimating covariance} and \ref{numerical part}.
\end{itemize}

In case there are several alternative models, it is common to use their estimated prediction errors for selecting the best model. 
\begin{definition}
Given a set of models $\mathcal{H},$ the best $\widehat{CV_c}$ model is:
\[
\underset{h\in\mathcal{H}}{\argmin}\;\widehat{CV_c}(h),
\]
where $\widehat{CV_c}(h)$ is $\widehat{CV_c}$ for model $h.$
\end{definition}

\subsubsection{Specifying \texorpdfstring{$\hbold_{te}$}{TEXT} and \texorpdfstring{$\Cov\big(\ybold_{tr},y_{te}\big)$}{TEXT}}\label{Specifying the correction}
The term $n\hbold_{te}\Cov\big(\ybold_{tr},y_{te}\big)$ in Eq. (\ref{Prediction error estimator}) is relevant when $\ybold_{tr}$ and 
$y_{te}$ are correlated (but not in the same way as $y_{k}$ and $\ybold_{-k})$ e.g., assuming $T=\{X,\ybold\}$ has a hierarchical clustering structure: 
\begin{gather*}
    y_{i,j,r}=\xbold_{i,j,r}^{t}\betabold+u_{i}+b_{i,j}+\epsilon_{i,j,r},\\
    u_{i}\sim N(0,\sigma^2_{u}),\;b_{i,j}\sim N(0,\sigma^2_{b}),\ \epsilon_{i,j,r}\sim N(0,\sigma^2_{\epsilon}),\\
    \forall\; i\in\{1,...,I\},\;j\in\{1,...,J\},r\in\{1,...,R\}.
\end{gather*}
Assume that the prediction goal is to estimate:
% Now, let us assume that the using a subsample of $n-1$ observations from $T$ (i.e., $T_{tr}$), the prediction goal is predicting a new observation:
\[
y_{te}=\xbold_{te}^{t}\betabold+u_{te}+b_{te}+\epsilon_{te},
\]
where $u_{te}\in [u_1,...,u_I],$ however $b_{te}$ does not depend on the realizations of $\{b_{i,j}\}_{i\in\{1,...I\},\;j\in\{1,...,J\}}.$ In words, the prediction goal is to predict a new observation that relates to one of the high-level clusters (indexed by $i$), but from a new low-level cluster (indexed by $j$).

In this setting, since the realization of $u_{te}$ appears in $\ybold,$ then $\Cov(\ybold_{tr},y_{te})\neq0$ and $n\hbold_{te}\Cov\big(\ybold_{tr},y_{te}\big)$ is required for calculating $\widehat{CV_c}.$

Although $T_{te}$ is an abstract object, $\hbold_{te}$ and $\Cov(\ybold_{tr},y_{te})$ can be extracted from the data. Since $\hbold_{te}$ is distributed as 
$\hbold_{k}$ (for random k), then $\hbold_{te}$ can be extracted by selecting a random $\hbold_{k}$ from $H_{cv}.$ $\Cov(\ybold_{tr},y_{te})$ is extracted correspondingly as $\Cov(\ybold_{-k},y_{k}|\boldsymbol{b},\epsilonbold),$ where $\boldsymbol{b}$ and $\epsilonbold$ are vectors containing $\{b_{i,j}\}_{i\in\{1,...I\},\;j\in\{1,...,J\}}$ and $\{\epsilon_{i,j,r}\}_{i\in\{1,...I\},\;j\in\{1,...,J\},\;r\in\{1,...,R\}},$ respectively. By conditioning on these vectors, the remaining covariance stems from $\boldsymbol{u}=[u_1,...,u_I]$ only, corresponding to the covariance between $\ybold_{tr}$ and $y_{te}.$

Simpler, $n\hbold_{te}\Cov\big(\ybold_{tr},y_{te}\big)$ can be replaced by 
$\tr\big(H_{cv}\Cov(\ybold,\ybold|\boldsymbol{b},\epsilonbold)\big),$ such that:
\begin{align*}
\widehat{CV_c}=&\frac{1}{n}\big(\ybold-H_{cv}\ybold\big)^{t}\big(\ybold-H_{cv}\ybold\big)+\frac{2}{n}\Big[\tr\Big(H_{cv}\Cov\big(\ybold,\ybold\big)\Big)-\tr\Big(H_{cv}\Cov\big(\ybold,\ybold|\boldsymbol{b},\epsilonbold\big)\Big)\Big]\\
=&\frac{1}{n}\big(\ybold-H_{cv}\ybold\big)^{t}\big(\ybold-H_{cv}\ybold\big)+\frac{2}{n}\tr\Big(H_{cv}\Cov\big(\ybold,\ybold\big|\boldsymbol{u})\Big).
\end{align*}

\subsubsection{Interpretation of the Results}\label{Interpretation of the Results}
The correction $2/n\times\Big[\tr\Big(H_{cv}\Cov\big(\ybold,\ybold\big)\Big)-n\hbold_{te}\Cov\big(\ybold_{tr},y_{te}\big)\Big]$
is intuitive since it expresses the difference between the correlation structure of the target prediction problem and the correlation structure in the available dataset, $T.$ 

Theorems \ref{wcv for same s} and \ref{EDF Theorem} emphasize that the question whether CV is biased relates to the distributional setting and it is indifferent to the implemented algorithm. The implemented algorithm is expressed only by the values of $H_{cv}$ in $\widehat{CV_c}.$ 
% Example ?? demonstrates this understanding for a use case.
% \begin{example}
% Consider a correlation setting of $\Cov\big(y_i,y_{te}\big)=0,\;\forall i\in\{1,...,n\}$ and GLS is implemented in order to estimate $\E(\ybold|X).$ Note that 
% \end{example}

An interesting example that stresses this understanding is when generalized least squares (GLS) is implemented in a use case with a correlation setting of $\Cov\big(\ybold_{tr},y_{te}\big)=0.$ Since GLS estimates $\E(\ybold|X)$ and does not utilize explicitly the random effects for achieving better prediction accuracy --- as LMM does by estimating $\E(\ybold|X,\sbold)$ --- then one may think that CV is unbiased when GLS is implemented, regardless of the correlation structure of $\Cov\big(\ybold_{tr},y_{te}\big).$ However, as was mentioned, this is wrong since the CV biasedness relates to the distributional setting rather than to the implemented algorithm. The bias in this case is:
\[
w_{cv}(GLS)=\E\,\sum_{k=1}^{n}\xbold_{k}^{t}\Big(X_{-k}^{t}\Cov(\ybold_{-k},\ybold_{-k}\big)^{-1}X_{-k}\Big)^{-1}X_{-k}^{t}\Cov\big(\ybold_{-k},\ybold_{-k}\big)^{-1}\Cov\big(\ybold_{-k},y_{k}\big).
\]
If $\Cov(\ybold,\ybold)=\sigma^2_{\epsilon}I+\rho\mathbbm{1}_{n\times n},$ where $\mathbbm{1}_{n\times n}$ is a $n$ by $n$ matrix of ones and $\rho\in\mathbb{R}^+,$ then:
\begin{align*}
w_{cv}&=\frac{\rho}{\sigma^2_{\epsilon}+\rho\big(n-1\big)}\sum_{k=1}^{n}\E\,\xbold_{k}\Big(X_{-k}^{t}\Cov\big(\ybold_{-k},\ybold_{-k}\big)^{-1}X_{-k}\Big)^{-1}X_{-k}^{t}\mathbbm{1}_{n-1},
\end{align*}
where the identity $\Cov\big(\ybold_{-k},\ybold_{-k}\big)^{-1}\Cov\big(\ybold_{-k},y_{k}\big)=\mathbbm{1}_{n-1}\rho/\Big(\sigma^2_{\epsilon}+\rho\big(n-1\big)\Big)$ is based on the result by \cite{miller1981inverse}.
% \textbf{(internal: The equality is $(aI+b\mathbbm{1}_{n\times n})^{-1}=\frac{1}{a}I-\frac{b}{a(a+bn)}\mathbbm{1}_{n\times n}$.)}

The intuition behind the biasedness of CV in this setting is that although GLS does not utilizes explicitly estimated random effect realizations, its estimated model coefficients are still affected by the random effect realizations in $T.$ Similarly, if GLS was replaced by OLS,  CV would still be biased under this correlation setting.

It is also important to emphasize that Theorem \ref{wcv for same s} derives $w_{cv}$ explicitly ($w_{cv}=0$) for any model under its assumed conditions, however $\widehat{CV_c}$ uses an approximated $w_{cv},$ which is only relevant for linear models.

\subsubsection{Comparison with Expected Optimism}
% \srcomm{For next version, I would consider moving this to Discussion, and possibly most of the previous subsection too}
% in a case $\Cov\big(\ybold,y_{te}\big)=0,$ i.e., $2\tr\Big(H_{cv}\Cov\big(\ybold,\ybold\big)\Big)/n,$
Below, a comparison between the correction in $\widehat{CV_c}$ and the expected optimism correction \citep{efron1986biased} is presented.

Expected optimism correction was developed in a context of in-sample prediction error measure:
\[
in{-}sample\;error=\E_{\boldsymbol{y}^*,\ybold}\frac{1}{n}\|\boldsymbol{y}^{*}-\hat{\ybold}(X;T)\|^2,
\]
where $\boldsymbol{y}^*\in\mathbb{R}^n$ is identically distributed but independent copy of $\ybold.$ The in-sample prediction error is estimated by
\[
\frac{1}{n}\|\ybold-\hat{\ybold}(X;T)\|^2+w,
\]
where $w$ is the expected optimism:
\[
w=\E_{\ybold^*,\ybold}\big(\frac{1}{n}\|\boldsymbol{y}^*-\hat{\ybold}(X;T)\|^2-\frac{1}{n}\|\ybold-\hat{\ybold}(X;T)\|^2\big).
\]
If $\hat{\ybold}(X;T)=H\ybold,$ for some hat matrix $H,$ then
\[
w=\frac{2}{n}\tr\Big(H\Cov\big(\ybold,\ybold\big)\Big).
\]

The similarity between $w$ and the correction in $\widehat{CV_c}$ in case $\Cov\big(\ybold_{tr},y_{te}\big)=0,$ i.e., in case $w_{cv}=\E\Big(2/n\times\tr\big(H_{cv}\Cov(\ybold,\ybold)\big)\Big),$ is interesting since it reflects the relation between generalization error and in-sample error and emphasizes the role of the linearity in this relation.

The fundamental difference between in-sample error and generalization error is that in the latter, the covariates matrices, $X_{tr},\; \xbold_{te}$ are assumed to be random variables and therefore in generalization error, unlike in in-sample prediction error:
\begin{enumerate}
    \item $X_{tr}$ and $\xbold_{te}$ are not identical.
    \item An expectation is taken also over $\{X_{tr},\;\xbold_{te}\}.$
\end{enumerate}
As was mentioned in the previous sections, the inner-sampling mechanism of $\{T_{k},T_{-k}\}$ in CV that emulates repeated sampling of $\{T_{te},T_{tr}\}$ from $\{P_{\xbold}, P_{y|\xbold,\sbold}\}$ expresses these properties. These properties are also reflected in the correction. Since
% when comparing 
% generalization prediction error perspective when comparing between $w$ and $2\tr(H_{cv}\Cov(\ybold,\ybold))n,$ % which can also be written as 
\[
\frac{2}{n}\tr\Big(H_{cv}\Cov\big(\ybold,\ybold\big)\Big)=\frac{2}{n}\sum_{k=1}^{n}\hbold_{k}\Cov(y_{k},\ybold_{-k}),
\]
then $2/n\times\tr\Big(H_{cv}\Cov\big(\ybold,\ybold\big)\Big)$ averages $n$ identically distributed atoms, $\hbold_{k}\Cov(y_{k},\ybold_{-k}),$ where each one of them is an unbiased estimator of $w_{cv}.$ Unlike in $w,$ which relates to a specific covariates matrix realization, $X,$ the atoms in $2/n\times\tr\Big(H_{cv}\Cov\big(\ybold,\ybold\big)\Big)$ relate to different covariates realizations, $\{X_{-k},\xbold_{k}\}_{k=1}^{n}.$

In addition, when $\Cov(\ybold,\ybold)=\sigma^2_{\epsilon}I,$ while $w_{cv}=0,$ $w=2\sigma^2_{\epsilon}/n\times\tr(H).$ This is since in CV the sample is partitioned into training and test, however in expected optimism approach the whole sample is used for both tasks --- training and test.

\subsection{Advanced Correlation Settings}

\subsubsection{Kriging}\label{Kriging}
Many applications with spatial and temporal data are analyzed using \textit{random functions} framework, rather than multivariate random variable framework.
% (or in other words, a stochastic process approach, rather than probabilistic approach).% 
A comprehensive review about random functions data analysis can be found in \cite{wang2016functional}. Here we focus on interpreting the results from Sections \ref{section same s} and \ref{section new s} in the context of a specific use case in the random functions framework --- Kriging with Gaussian process regression \citep[GPR]{rasmussen2006gaussian}. Numerical analysis of $\widehat{CV_c}$ implementation in real spatial dataset is presented in Section \ref{California housing, example}.

In Kriging \citep{goovaerts1999geostatistics} the goal is to create a climate map on some surface, $\mathbb{A},$ using climate predictions at a high-resolution grid of the surface. The predictions at the grid points are based on a predictive model that was fitted to a sample, $T,$ that was drawn from this surface, but covers the surface sparsely. In many cases, GPR is the predictive modeling method that is used for Kriging. In this method, as well as in other functional data analysis methods, the mean and the covariance of the predicted variable are formulated as functions. The mean function typically depends on fixed effects of some covariates (like elevation). The estimated mean function in GPR is a linear function of $\ybold.$ The covariance function, which is termed the \textit{kernel function}, $\mathcal{K_{\mathbb{A}}}:(\mathbb{A}\times \mathbb{A})\to \mathbb{R},$ measures the covariance between each two points in $\mathbb{A},$ whether the points are in the sample $T$ or not. Unlike in the multivariate approach, where the correlation is induced by a latent random variable realization, in the functional approach the correlation structure of the surface $\mathbb{A},$ as it is expressed by $\mathcal{K}_{\mathbb{A}},$ is induced by realization of a stochastic process instance --- \textit{latent random function}, $\sbold.$
Since Theorems \ref{wcv for same s} and \ref{EDF Theorem} are based on the relation between $P_{y_{k}|\xbold_{k},T_{-k}}$ and $P_{y_{te}|\xbold_{te},T_{-k}},$ rather than whether the source of the correlation between the observations is a latent random variable or latent random function, these theorems can also be applied here.

Let us consider three scenarios. The first scenario is the classical Kriging use case, where the observations of both samples, $T_{te}$ and $T,$ are randomly sampled from the same surface, $\mathbb{A},$ i.e., observations of both samples are drawn independently from $\{P_{\xbold},\;P_{y|\xbold,\sbold=\sbold_{0}}\},$ where $\sbold_{0}$ is the realization of the latent random function $\sbold$ in the surface $\mathbb{A}.$ In this case Theorem \ref{wcv for same s}'s condition is satisfied and therefore $w_{cv}=0$ and CV is suitable.

The second scenario is when the realization of $\sbold$ is not the same in $T_{te}$ and  $T,$ and therefore while the observations of $T$ follow $\{P_{\xbold},\;P_{y|\xbold,\sbold=\sbold_{0}}\},$ the observation in $T_{te}$ follows $\{P_{\xbold},\;P_{y|\xbold,\sbold=\sbold_{te}}\}.$ In this case Theorem \ref{wcv for same s}'s condition is not satisfied. An example for this scenario is when $T_{te}$ is sampled from the same surface as $T,$ $\mathbb{A},$ however at a future time-point (e.g., when the goal it to create a climate map for the next year based on this year's data). In this case, $y_k$ and $\ybold_{-k},$ which are sampled at the same time point, are more correlated than $y_{te}$ and $\ybold_{-k},$ which are sampled at different time-points. Therefore, CV is biased. As we saw in Section \ref{section new s}, if a linear model (such as GPR) is used, then $\widehat{CV_c}$ is an unbiased estimator of the generalization error and therefore should be used instead of CV.

Another spatial application in this scenario is when $T_{te}$ is sampled from the surface $\mathbb{A}',$ which is different than  $\mathbb{A}.$ Since the surfaces are different, then their latent random function realizations are different. Therefore, assuming observations in both samples, $T$ and $T_{te},$ were drawn from the \textit{same marginal distribution} --- $\{P_{\xbold},\;P_{y|\xbold}\}$ ---, then $\widehat{CV_c}$ should be used instead of CV.

Another interesting scenario that is not covered either by Theorem \ref{wcv for same s} or by Theorem \ref{EDF Theorem} is when $\{y_k,\xbold_{k}\}_{k=1}^{n}$ and $\{y_{te},\xbold_{te}\}$ are drawn from \textit{different marginal distributions} --- $\{P_{\xbold},\;P_{y|\xbold}\}$ and $\{P^{te}_{\xbold},\;P^{te}_{y|\xbold}\},$ respectively. An example for this scenario is when Kriging is used for predicting extrapolated spatial points with respect to the sample $T.$ It may happen due to sampling challenges, such as sampling from mountainous and deep marine regions \citep{rabinowicz2018assessing}. This scenario, which violates the setting that is assumed in Theorems \ref{wcv for same s} and \ref{EDF Theorem} requires further research.

\subsubsection{Longitudinal Data}\label{longitudinal data}
Another common setting with correlation structure is longitudinal data --- where there are several subjects that are repeatedly observed over time. In this setting, due to the temporal orientation, the correlation structure is more complicated than in a simple clustered data. 

Let us consider three scenarios that are equivalent to the three scenarios given in Section \ref{Kriging}. However, unlike in Section \ref{Kriging}, the multivariate framework would be considered (rather than the functional framework).

The first scenario is when the prediction goal is to predict a new observation, $T_{te},$ of one of the subjects in $T,$ sampled at a random time-point from the same distribution that the time-points of the observations in $T$ follow. Since $T_{te}$ and $T$ are sampled from the same subjects, then $\sbold_{te}=\sbold_{0}$ and therefore $P_{y_{te}|\xbold_{te},T_{-k}}= P_{y_{k}|\xbold_{k},T_{-k}}.$ By Theorem \ref{wcv for same s} this gives $w_{cv}=0,$ and therefore CV is suitable in this case.

The second scenario is when the prediction goal is predicting a new observation that relates to a subject that is not in $T$ and therefore $\sbold_{te}\neq \sbold_{0}$ and $P_{y_{te}|\xbold_{te},T_{-k}}\neq P_{y_k|\xbold_{k},T_{-k}}.$ For this scenario, given that $\{y_k,\xbold_{k}\}$ and $\{y_{te},\xbold_{te}\}$ were sampled from the same marginal distribution, $\{P_{\xbold},\;P_{y|\xbold}\},$ and a linear model is implemented, then by Theorem \ref{EDF Theorem}, $\widehat{CV_c}$ should be used instead of CV.  Numerical analysis of $\widehat{CV_c}$ implementation in this scenario is presented in Section \ref{simulation}.

Another scenario is when
$\{y_k,\xbold_{k}\}$ and $\{y_{te},\xbold_{te}\}$ are sampled from different marginal distributions, $\{P_{\xbold},P_{y|\xbold}\}$ and $\{P^{te}_{\xbold},\;P^{te}_{y|\xbold}\},$ respectively.
The marginal distributions can be different due to various reasonable prediction goals, such as forcing the data point in $T_{te}$ to extrapolates the data points in $T$ with respect to the time variable, which results in $\xbold_{te}$ and $\xbold_{k}$ being nonidentically distributed. This scenario violates the assumptions in Theorems \ref{wcv for same s} and \ref{EDF Theorem} and therefore requires further research.

% When analyzing longitudinal dataset, one may have different prediction goals. One prediction goal is to create a model that predicts observations of subjects that are independent of the subjects that do not appear in $T.$ In this case it is clear that Theorem \ref{wcv for same s}’s conditions are not satisfied. This scenario will be discussed in Section ?. 
% In case the goal is to predict observations from the same subjects that appear in $T,$ then Theorem \ref{wcv for same s}’s conditions may be satisfied. They are not automatically satisfied since in hierarchical data structure as in longitudinal data, sometimes additional restrictions are forced on the prediction set that violate Theorem \ref{wcv for same s}’s conditions. An example for such restrictions that are commonly required in longitudinal data applications is forcing the data points in the prediction set to be extrapolation of the data points that in the training set with respect to time. Therefore, if time affect the distributions than Theorem 3’s conditions are not satisfied. \footnote{These type if use cases require further research.} In case these extra conditions are not satisfied, then Theorem 3’s conditions are satisfied and therefore $w_{cv}=0.$

% \subsubsection{Phylogenetic Trees}
% \textbf{(Do you suggest researching also this topic?)}

\subsection{Comparison with Other Methods}\label{Comparison with other methods}
Several cross-validation variants were proposed for settings involving correlated data. Some of them were proposed from a perspective that correlation between the folds causes K-fold CV to underestimate the generalization error. As was shown above, this perception is wrong in many scenarios. Other variants are relevant for very specific applications under various sampling restrictions. Below, several cross-validation variants are described and compared to $\widehat{CV_c}.$ 

% \textbf{(I use Cross-Validation instead of CV in order to distinct between the math formula definition that is given in the previous section to the general concept of Cross-Validation which contains many methods.}

% reducing the correlation between the folds. Mostly they were proposed from a perspective that correlation between the folds couses the CV to be an underestimator of the generalization error. As was shown above, this perception is wrong in many scenarios. Below, several methods that suggest different techniques for reducing correlation between the folds are described and compared to $\widehat{CV_c}.$

One method is h-blocking \citep{burman1994cross}, which is mainly relevant for spatial data. In h-blocking, in order to reduce the correlation between the folds, the analyzed surface is partitioned into blocks (folds) that are separated from each other by some distance, $h.$  As was described above, many use cases do not require any correlation reduction between the folds and K-fold CV is suitable, however the use of h-blocking is often recommended regardless of the predictive problems setup \citep{roberts2017cross}. Let us focus on a scenario when $P_{y_{te}|\xbold_{te},T_{-k}}\neq P_{y_{k}|\xbold_{k},T_{-k}},$ and therefore the condition in Theorem \ref{wcv for same s} is not satisfied, causing K-fold CV to be biased. 
% due to that correlation observations from $T_k$ are correlated with the observations from $T_{-k},$ while observations from $T_{te}$ are uncorrelated with the observations from $T_{tr}.$ 
In this scenario, although the h-blocking approach may seem reasonable, in fact, it suffers from several issues that do not affect $\widehat{CV_c}.$ For example, frequently, creating the separation between the folds requires omitting observations from the training sample. In addition, the folds that are generated by h-blocking have different distributions, in particular their distributions are different than $P_{T_{te}}.$ Therefore, h-blocking may provide a biased prediction error estimator with respect to the generalization error. Moreover, since some of the blocks are at the edge of the surface, then the prediction of those blocks becomes predicting spatial extrapolation, which might be inaccurate and does not reflect the planned prediction problem. This implication can affect dramatically the prediction error estimate \citep{roberts2017cross}.

Another method is leave cluster out \cite[LCO]{rice1991estimating}. This method is relevant for the case when $\Cov(\ybold_{tr},y_{te})=0$ and the training set, $T,$ has a clustered correlation structure. LCO eliminates the correlation between the folds by defining each cluster as a fold. This method suffers from several challenges that do not appear in $\widehat{CV_c}.$ First, using LCO forces the number of folds to be equal to the number of clusters. Another issue is when different clusters contain a substantially different number of observations, in which case LCO prediction error estimator can be biased with respect to the generalization error. In addition, validity of LCO is challenged when some clusters have different distribution than other clusters. In this case, as opposed to the generalization error definition, the observations in $X_{k}$ and $X_{-k}$ are nonidentically distributed.

Another cross-validation variant that is relevant for a balanced longitudinal data setting is leave observation from each cluster out \cite[LOFCO]{wu2002local}. This method is relevant for the case when the goal is to predict a new observation for each one of the subjects that appear in the training set. The folds partitioning mechanism in LOFCO is that different folds refer to different time-points, such that each fold contains the observations that were collocated at the same time-point across all the subjects. This partitioning is feasible due to the balanced data design assumption. The challenges in this method are similar to the challenges mentioned above: it requires a balanced data design, the number of folds are forced by the data structure, $X_{k}$ and $X_{-k}$ are not identically distributed --- in particular their time-points covariate is nonidentically distributed.
% \textbf{(I consider to remove this paragraph. LOFCO is not relevant for our main use case of having $\Cov(\ybold_{tr},y_{te})=0.$ What do you think? You suggested to shorten it, I do not see anyway to shorten it significantly. Do you have any idea?)}
% \srcomm{I tend to leave it in to demonstrate our familiarity with the lit}

LCO, LOFCO and h-blocking reflect the understanding that nonstandard CV may be needed in presence of correlations, and offer solutions for very specific types of datasets and correlation settings that apply to any modeling technique. In contrast, $\widehat{CV_c}$ can be applied in a wide range of types of datasets and correlation settings, however it is limited to linear models.   

% Beside the fact that $\widehat{CV_c}$ does not suffer from these issues, the main strength in $\widehat{CV_c}$ is its generality. While h-blocking, LCO and LOFCO relate to specific settings --- spatial data, simple clustered data design and balanced longitudinal data design --- $\widehat{CV_c}$ is more general and suitable for many types of datasets and correlation structures. One aspect where h-blocking, LOCO and LOFCO are more general than $\widehat{CV_c}$ is that $\widehat{CV_c}$ assumes a linear predictor, while the other methods do not. Another advantage of LCO and LOFCO over $\widehat{CV_c}$ is that $\widehat{CV_c}$ is based on the parameters in $\Cov(\ybold,\ybold)$ and $\Cov(\ybold,y_{te}),$ which are mostly estimated and therefore contribute an extra variance to $\widehat{CV_c}.$ It is also important to stress that LCO and LOFCO reveal that the understanding about adjusting CV with respect to the correlation between the training and prediction sets appears in the statistical literature \citep{fang2011asymptotic,little2017using}.

One last estimator that should be compared to $\widehat{CV_c}$ is 
\begin{align}\label{Altman}
\frac{\frac{1}{n}\|\ybold-H\ybold\|^2}{\Big[1-\frac{1}{n}\tr\Big(H\Cov\big(\ybold,\ybold\big)\Big)\Big]^2},
\end{align}
where $H$ satisfies, $\hat{\ybold}(X;T)=H\ybold.$ This estimator has the same spirit as generalized CV \cite[GCV]{craven1978smoothing} which approximates LOO for i.i.d. data as follows:
\[
GCV=\frac{\frac{1}{n}\|\ybold-\hat{\ybold}(X;T)\|^2}{\Big[1-\frac{1}{n}\tr\big(H\big)\Big]^2}.
\]
Eq. (\ref{Altman}) was first proposed by \cite{altman1990kernel} in context of time-series and later by \cite{opsomer2001nonparametric} in context of spatial data analysis. \citeauthor{altman1990kernel}'s motivating application was selecting the best bandwidth for modeling the trend in the data. He argues that replacing $\tr\big(H\big)$ in GCV by $\tr\Big(H\Cov\big(\ybold,\ybold\big)\Big)$ denoises the correlation between observations. Although Eq. (\ref{Altman}) and $\widehat{CV_c}$ are different and developed for different applications, they share some similarity in suggesting a corrected version for CV rather than controlling the partitioning scheme. Analyzing numerically Eq. (\ref{Altman}) did not yield promising results, in fact its approximated expectation did not converge to the same scale as the approximated generalization error, CV and $\widehat{CV_c}.$ We conclude that the heuristic argument behind the derivation of Eq. (\ref{Altman}) as an estimator of the generalization error does not hold in practical cases where the rigorous $\widehat{CV_c}$ gives valid results.

\subsection{Estimating \texorpdfstring{$\Cov(\ybold,\ybold)$}{TEXT}}\label{Estimating covariance}
In most applications $\Cov(\ybold,\ybold)$ is unknown in advance and therefore in order to implement $\widehat{CV_c},$ it should be estimated. In many cases, such as when GLS is implemented, the covariance matrix is already estimated in the model fitting part and can be also used in $\widehat{CV_c}.$ The use of estimated covariance matrix instead of the true covariance matrix can add variance to $\widehat{CV_c}$ and even can make $\widehat{CV_c}$ biased, especially when the estimation of the covariance matrix is imprecise, due to small sample size or other reasons. Several papers analyze the effect of plugging in the estimated covariance matrix instead of the true one in context of prediction error estimation, in various CV versions \citep{altman1990kernel,francisco2005smoothing} or AIC-type versions \citep{vaida2005conditional,liang2008note,greven2010behaviour,rabinowicz2018assessing}. Some papers derive extra corrections for their prediction error estimators which are theoretically valid in certain use cases, but are not applicable due to computational complexity \citep{liang2008note}, others show numerically that the prediction error estimator that are based on the estimated covariance matrices instead of the true covariance matrices performs well. Using simulations and real data analysis, we show in Section \ref{numerical part}, that the effect of using the estimated covariance matrices in $\widehat{CV_c}$ on its bias and variance is small, especially when the sample size of the training set is not very small. These results support the results of \cite{altman1990kernel,vaida2005conditional,francisco2005smoothing}.

\section{NUMERICAL RESULTS}\label{numerical part}
This section compares $\widehat{CV_c}$ and CV, with respect to the approximated generalization error, using simulation and real datasets analyses. Different prediction goals and correlation structures are analyzed.
Relevant datasets and code can be found in
\href{https://github.com/AssafRab/CVc}{https://github.com\\/AssafRab/CVc}.
\subsection{Simulation}\label{simulation}
% \subsubsection{Setting -- Simulation I} \label{simulation 1, setting}
% \subsubsection{Setting}
% \underline{Generating $T$}\\
The dependent variable, $\ybold\in\mathbb{R}^{n},$
% its fixed effects covariates $\{\xbold_r\}_{r=1}^{9},$ and the random effects covaraties $\{\zbold_r\}_{r=1}^{2},$
% matrices $X=[\xbold_0,...,\xbold_{10}]\in\mathbb{R}^{n\times 11},\;$
% $Z=\begin{bmatrix}
%           \boldsymbol{z}_1^t \\
%           \vdots \\
%           \boldsymbol{z}_n^t
%          \end{bmatrix}\in\mathbb{R}^{n\times 2}$ 
was sampled from the following model:
\begin{gather*}
y_{i,j,k}=0.1\sum_{r=1}^{9} x_{i,j,k,r}+u_i+\sum_{r=1}^{2} z_{i,j,k,r}b_{i,j,r}+\epsilon_{i,j,k},\\
i\in\{1,...,I\},\;j\in\{1,...,5\},\;k\in\{1,...,10\},
\end{gather*}
where
\begin{itemize}
    \item the random effects, $u_i,\;\bbold_{i,j}=[b_1,b_2]$ and $\epsilon_{i,j,k}$ are independent and distributed as follows:
    \begin{gather*}
u_i\overset{ind}{\sim} N\big(0,3^2\big),\;\bbold_{i,j}=[b_{i,j,1},b_{i,j,2}]\in\mathbb{R}^2\overset{ind}{\sim} N(0,\begin{bmatrix} 3^2 & 0\\ 0 & 1 \end{bmatrix}),\;\epsilon_{i,j,k}\overset{ind}{\sim} N\big(0,1\big),
\end{gather*}
\item the covariates, $\{\xbold_r\}_{r=1}^{9}$ are:
\begin{itemize}
    \item $x_{i,j,k,1}=1,\;x_{i,j,k,2}=k,\;\forall i,j,k$ are the intercept and the time covariates,
    \item 
%     \begin{gather*}
%     x_{i,j,k,r}=\eta_i+\delta_{i,j,k}\;\; \forall r\in\{3,...,9\},
% \end{gather*}
$x_{i,j,k,r}=\eta_i+\delta_{i,j,k},\;\forall i,j,k$ and $\forall r\in\{3,...,9\},$
where $\eta_i\overset{ind}{\sim} N(0,1),\;\delta_{i,j,k}\overset{ind}{\sim} N(0,1)$ are independent,
\end{itemize}
\item $z_{i,j,k,1}=1,\;z_{i,j,k,2}=k\;\forall i,j,k$ are the covariates for random intercept and random slope.
\end{itemize}

% \begin{gather*}
% \xbold_1=\mathbbm{1},\;\xbold_2=\boldsymbol{time}=[1,...,10,1,...,10,\,...\,,1,...,10]^{t},\\
% \xbold_r\sim N\big(0,I_{n}\big),
% \;\xbold_r\independent \xbold_{r'}\;\forall r\neq r'\in\{3,...,11\},\\
% \zbold_{1}=\mathbbm{1},\;\zbold_{2}=\boldsymbol{time},\\
% u_i\sim N\big(0,10\big),\;\bbold_{i,j}=[b_{i,j,1},b_{i,j,2}]\in\mathbb{R}^2\sim N(0,\begin{bmatrix} 5 & 0\\ 0 & 0.1 \end{bmatrix}),\;\epsilon_{i,j,k}\sim N\big(0,5\big),\;u_i\independent\bbold_{i,j}\independent\epsilon_{i,j,k},
% \end{gather*}
% and all the realizations of each random effect vector, $\boldsymbol{u},\;\bbold,$ and $\epsilonbold,$ are independent.

This settings was simulated for three different sample sizes, $n=300/400/500$ (where $I$ varies respectively to $6/8/10$).
% $\boldsymbol{u}=[u_1,...,u_I],\;\bbold=\{\bbold_{i,j}\}_{i\in\{1,...,I\},j\in\{1,...,J\}}$ and $\epsilonbold=\{\epsilon_{i,j,k}\}_{i\in\{1,...,I\},j\in\{1,...,5\},k\in\{1,...,10\}}$ are independent.

% where $u_i$ is distributed $N\big(0,1\big),$ $\bbold_{i,j}\in\mathbb{R}^2\sim N(0,\begin{bmatrix} 1 & 0\\ 0 & 0.1 \end{bmatrix})$ and $\;\epsilon_{i,j,k}$ is distributed 
% $N\big(0,1\big).$ 
% and the random effect $b^{3}_{i,j}$ is distributed 
% $N\big(0,0.1\big).$ 

This is a hierarchical clustered structured with $I$ clusters of $50$ observations each, where within each cluster there are five subclusters of ten observations each. There is a random intercept for the high-level clusters and random intercept and slope for the subclusters. Therefore the covariance of $\ybold|\xbold_{1},...,\xbold_{9}$ is:
\[
\Cov(y_{i,j,k},y_{i',j',k'})=9\times1_{[i=i']}+1_{[i=i',j=j']}(9+1\times k\times k')+1_{[i=i',j=j',k=k']}.
\]

GLS was fitted using LOO  for eight nested models. Model $1$ contains the intercept and time, Model $2$ also contains $\xbold_{3},$ and so on. Model $8$ contains all the covariates.
% and therefore $\boldsymbol{u}$ and $\bbold$ contribute similar amount of correlation to $\ybold.$ \textbf{(I write it here in order to explain why we set the variance parameters values to such strange values. The average correlation from $\bbold$ is $5.5^2*0.1+5\approx 8.$)}
% \[
% X_1...,X_8\in\mathbb{R}^{n}\sim N\big(0,I\big)
% \]
% \[
% \betabold=\big(\beta_0,\beta_1,\beta_2,\beta_3,\beta_4,\beta_5,\beta_6,\beta_7,\beta_8\beta_9\big)^{t}=\big(2,2,2,2,0.1,0.1,0.1,0.1,0.1\big)^{t}
% \]
\
% \footnote{Practically, in order to save computational time, at each iteration $\Sigma_z$ was sampled from a larger matrix (without replacement). The larger matrix's dimension is $5000\times 5000.$}
% is a compound symmetry matrix
% , i.e., $\Sigma_Z=\sigma^2_{\rho}I\otimes\mathbbm{1}$ and $\otimes$ is a Kronecker product operator
% \[
% \Sigma_Z[i,j]=Signal\times\exp{\frac{-\|z_i-z_j\|^2}{2l^2}}
% \]
% and $z_{i}\in[0,1]^2.$ $z_{i}$ was drawn from $N(0,\sigma_{z}^2)$ and then was standardized for the interval $[0,1].$
% \footnote{Practically, in order to save computational time, at each iteration $\Sigma_z$ was sampled from a larger matrix (without replacement). The larger matrix's dimension is $10n\times 10n.$}
% \vspace{1em}
\subsubsection{Estimating Prediction Error}\label{estimating prediction error, simulation}
% \underline{Generating $T_{tr}$ and $T_{te}$}\\
$T_{tr}$ is a random subset of $T$ of size $n-1.$ $T_{te}$ is a single observation, drawn from the same marginal distribution, however with new independent realizations of all the random effects. Therefore, when implementing CV, while in $T:$
\[\Cov(\ybold_{-k},y_{k})\neq0\;\;\forall\;k\in\{1,...,n\},\]
in $\{T_{tr},T_{te}\}:$
\[\Cov(\ybold_{tr},y_{te})=0.\]
\vspace{1em}

% The variance of each model was calculated as follows:
% \begin{itemize}
%     \item Full model;  $\Var(\ybold|X)=V.$
%     \item Model without $X_3;$ $\Var(\ybold|X)=V+\beta^2_3\sigma_{x}^2I$
%     \item Model without $\{X_2,X_3\};$ $\Var(\ybold|X)=V+\beta^2_2\sigma_{x}^2I+\beta^2_3\sigma_{x}^2I$
% \end{itemize}
% These variances were used in the GLS formulas and in the corrections.

%  \vspace{1em}
% \underline{Prediction Error Estimation}\\
The generalization error was approximated by averaging $\big(y_{te}-\hat{y}(\xbold_{te};T_{tr})\big)^2,$ based on
$1000\times n$ samples of $\{T_{te},T_{tr}\}.$ The densities of CV and $\widehat{CV_c}$ were approximated based on $1000$ samples of $T.$ Since in this setting $\Cov(\ybold_{tr},y_{te})=0,$ the correction in this case is $2/n\times\tr\big(H_{cv}\Cov(\ybold,\ybold)\big).$
% In Section \ref{model selection, simulation} another פprediction goal, where $\Cov(\ybold_{tr},y_{te})\neq0,$ is analyzed.

% \subsubsection{Results -- Simulation I}
% \subsubsection{Results}
Figure \ref{density} shows the distribution of CV and $\widehat{CV_c}$ including their means for the saturated model when $n=400,$ compared to the generalization error. 
% The results for the unsaturated models can be found in the Appendix. 
Two versions
of CV and $\widehat{CV_c},$  are presented --- when the variance parameters are known, and when they are estimated. 
\begin{figure}[t]
    \centering
      \includegraphics[width=0.7\linewidth]{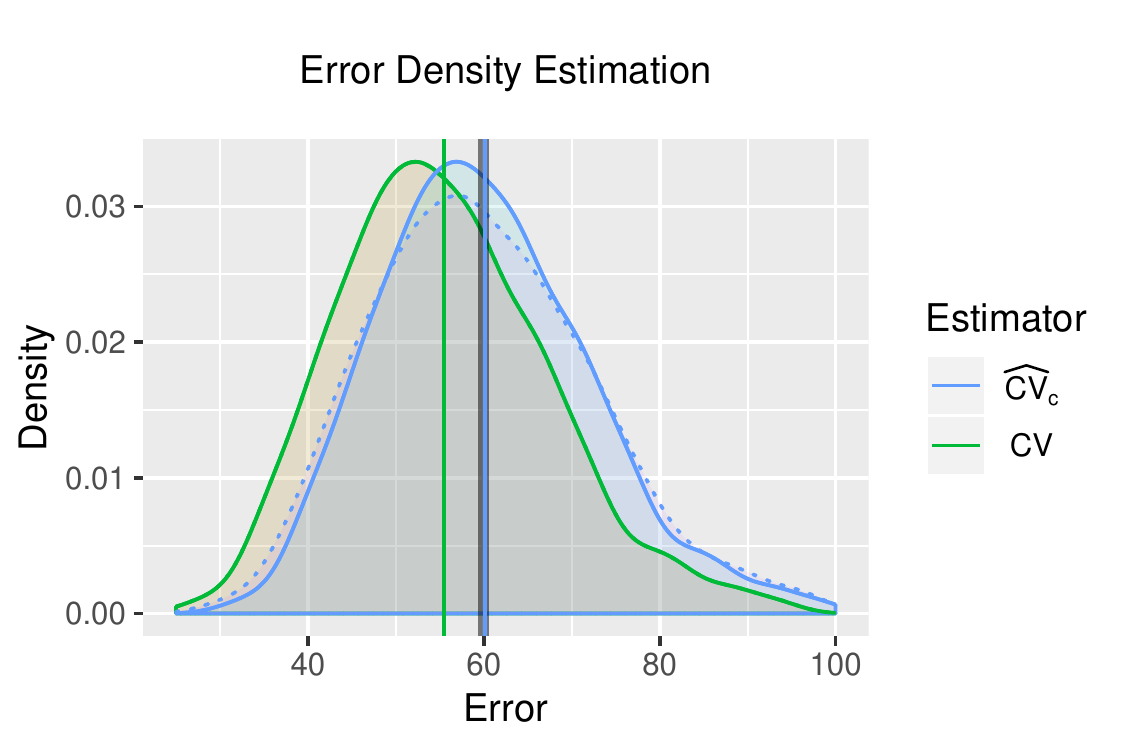}
    \caption{Densities of CV and $\widehat{CV_c}$ and their means compared to the generalization error (in black). Two scenarios are presented: when the variance parameters are known (solid line) and
when they are estimated (dashed line).}\label{density}
\end{figure}

As can be seen, $\widehat{CV_c}$ is an unbiased estimator of the generalization error, while CV estimator is biased. This is true for both versions, when the variance parameters are known and when they are estimated. Also, as expected, estimating the variance parameters increases the variance of $\widehat{CV_c}.$ Still, the density of the $\widehat{CV_c}$ version with the estimated variance parameters is similar to the version with the known variance parameters --- their averages are $60.14$ and $60.09$ (compared to $60.00$ for the approximated generalization error), their standard deviations are $12.07$ and $13.02,$ respectively.

In order to asses the performance of $\widehat{CV_c}$ version with the estimated variance parameters, compared to the version with the known variance parameters, a two sample
Anderson-Darling test \citep{anderson1952asymptotic} was used. The tested statistic is:
\[
\widehat{CV_c}-generalization\;error,
\]
where one sample uses $\widehat{CV_c}$ version with the estimated variance parameters, and the other
sample uses $\widehat{CV_c}$ version with the true variance parameters. Implementing the function
ad.test of the package kSamples in R software, the p-value of the test is $0.33.$ The result indicates that in this setting there is no evidence for significant difference between
the distribution of $\widehat{CV_c}-generalization\;error$ when the variance parameters are known in
advance or estimated.

Figure \ref{densityall} shows the same analysis as in Figure \ref{density}, however for various sample sizes.

% \begin{figure}[ht!]
% \centering
% \begin{minipage}{0.7\textwidth}
%     \centering
%       \includegraphics[width=0.77\linewidth]{Figures/density300400500  2020-01-28 .pdf}
%     \caption{Densities of CV and $\widehat{CV_c}$ and their means compared to the generalization error (in black), for different sample sizes. Two scenarios are presented: when the variance parameters are known and
% when they are estimated.}\label{densityall}
% \end{minipage}
% \captionsetup{type=table}
% \begin{minipage}{0.25\textwidth}
% \vspace{3em}
%   \resizebox{\textwidth}{!}{
% \centering
% \captionsetup{type=table} %% tell latex to change to table
% \begin{tabular}{|c|c|c|}
% \hline &\multicolumn{2}{c|}{Variance Parameters}\tabularnewline
% \hline 
% Sample Size &Known& Estimated \tabularnewline
% % Covariate\textbackslash{}Model & Model 1 & Model 2 & Model 3\tabularnewline
% \hline 
% $300$ & $6.54$  & $8.81$\tabularnewline
% \hline 
% $400$ & $5.79$ &  $7.23$ \tabularnewline
% \hline 
% $500$ & $5.30$  & $6.33$\tabularnewline
% \hline 
% \end{tabular}
% }
% \vspace{3.5em}
% % \caption{Model 1 contains two covariates, model 2 contains five covariates and model 3 contains eight covariates.}\label{Model Covariates}
% \end{minipage}
% \end{figure}

\begin{figure}[H]
    \centering
      \includegraphics[width=0.7\linewidth]{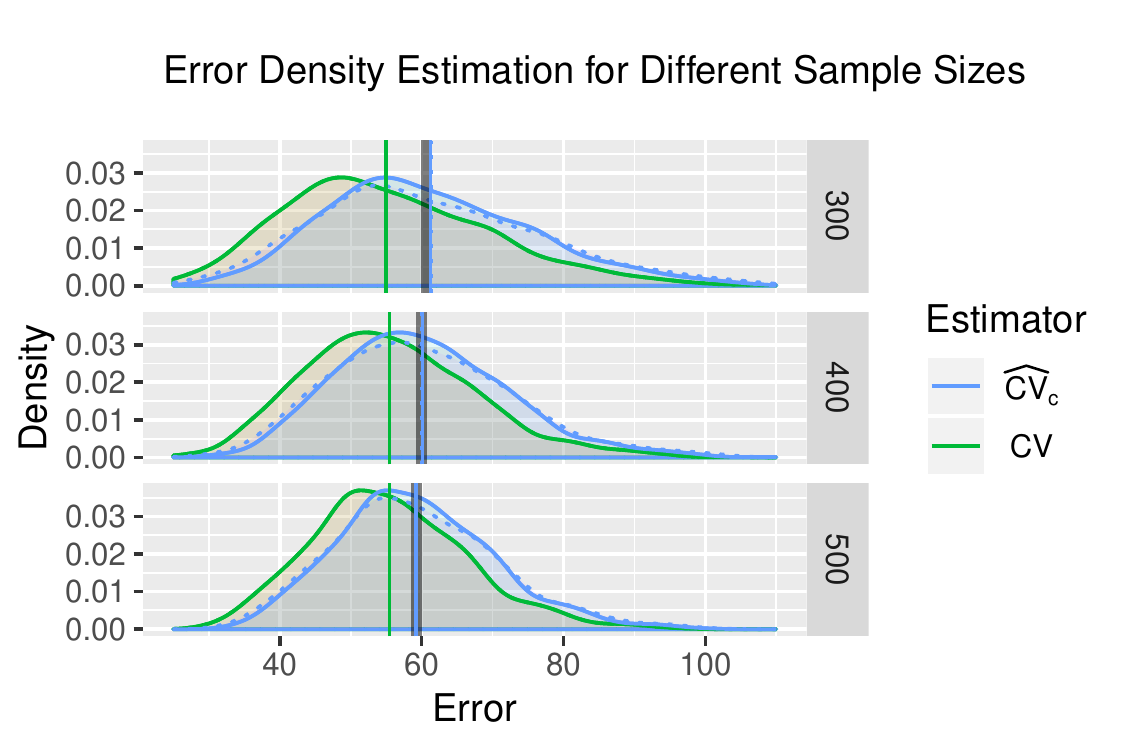}
    \caption{Densities of CV and $\widehat{CV_c}$ and their means compared to the generalization error (in black) for different sample sizes. Two scenarios are presented: when the variance parameters are known (solid line) and
when they are estimated (dashed line).}\label{densityall}
\end{figure}
As can be seen in Figure \ref{densityall}, the variance of the two $\widehat{CV_c}$ versions becomes similar as the sample size increases. While for $n=300$ the standard deviations of the version with the true variance parameters is $14.43$ and for the estimated is $16.04,$ when $n=500$ they are $11.25$ and $11.93,$ respectively. This is expected, since larger sample sizes provide more accurate variance parameters estimators. Also, as the sample size increases the CV bias, $w_{cv},$  decreases. This is specific to our setting, where $\Cov(\ybold,\ybold)$ becomes more sparse as the sample size increases and therefore $\E \big[2/n\times\tr\big(H_{cv}\Cov(\ybold,\ybold)\big)\big]$ decreases.

\subsubsection{\texorpdfstring{$\Cov(\ybold_{tr},y_{te})\neq0$}{TEXT} and Model Selection}\label{model selection, simulation}
In this simulation $\widehat{CV_c}$ is analyzed under a different prediction goal than in Section \ref{estimating prediction error, simulation}: estimating the generalization error of new observations from the \textit{same} high-level clusters, but from a \textit{different} subcluster, i.e.,  
\[
y_{te}=0.1\sum_{r=1}^{9}x_{te,r}+u_{i}+\sum_{r=1}^{2}z_{te,r} b_{te,r}+\epsilon_{te},
\]
where $u_{i}$ is a random effect realization that appears in the training sample, and $\bbold_{te}=[b_{te,1},b_{te,2}]^t$ is a new random effect realization vector. The new prediction goal derives a different correction --- since in this setting $y_{te}$ is correlated with $\ybold_{tr}$ through the random effect $u,$ then the correction in this case is $2/n\times\tr\big(H_{cv}\Cov(\ybold,\ybold|u_1,u_2...,u_I)\big),$ for more details see Section \ref{Specifying the correction}. 

Two different predictive algorithms were implemented:
\begin{itemize}
    \item LMM, which is the model that should be used in this setting under the normality assumption, since it utilizes explicitly the correlation between $y_{te}$ and $\ybold,$ using the relevant estimated random effects realizations, $\hat{u}_{1},...,\hat{u}_{I}.$ 
    \item GLS, which is inferior to LMM under the normality assumption, however unlike LMM, GLS is relevant for use cases where distributional assumptions cannot be taken. This is since GLS can be interpreted as an extension of the least squares algorithm, which does not rely on distributional assumptions.
\end{itemize}

Figure \ref{DensityLMMGLS} shows the model density estimation of $\widehat{CV_c}$ and CV compared to the approximated generalization error in the new setting when LMM and GLS are implemented.
\begin{figure}[t]
    \centering
    \begin{subfigure}[t]{0.5\textwidth}
        \centering
        \includegraphics[width=1\linewidth]{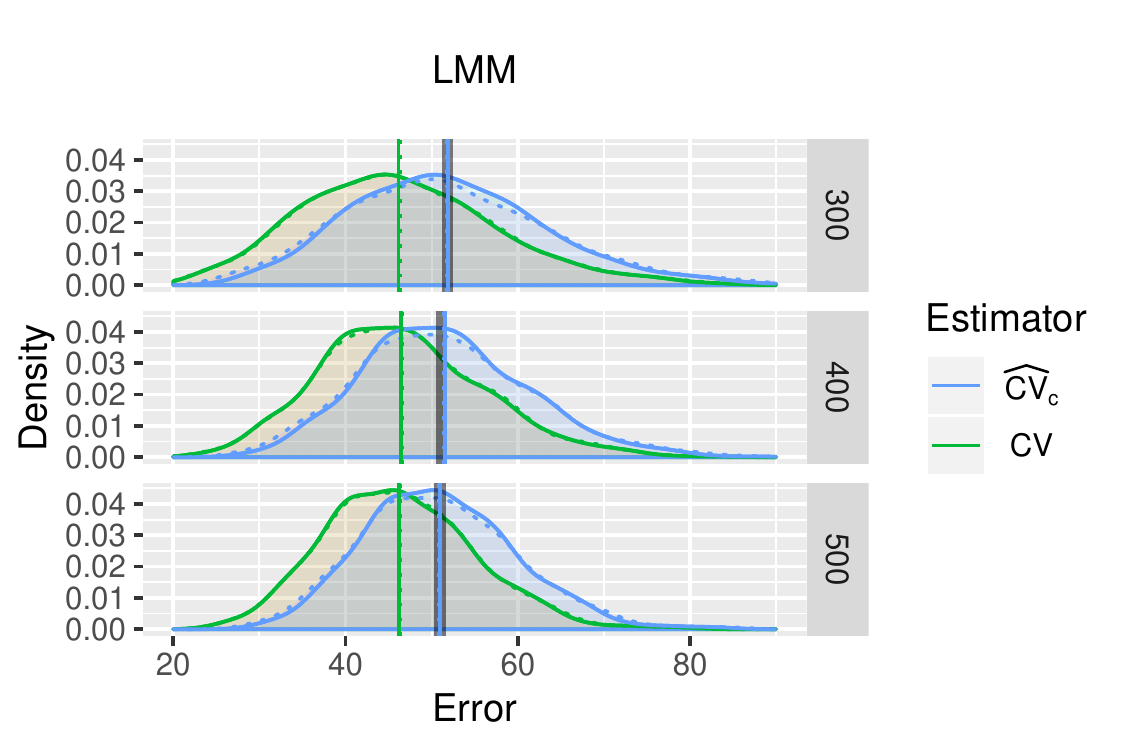}
        \caption{}\label{densityLMM}
    \end{subfigure}%
    ~ 
    \begin{subfigure}[t]{0.5\textwidth}
        \centering
        \includegraphics[width=1\linewidth]{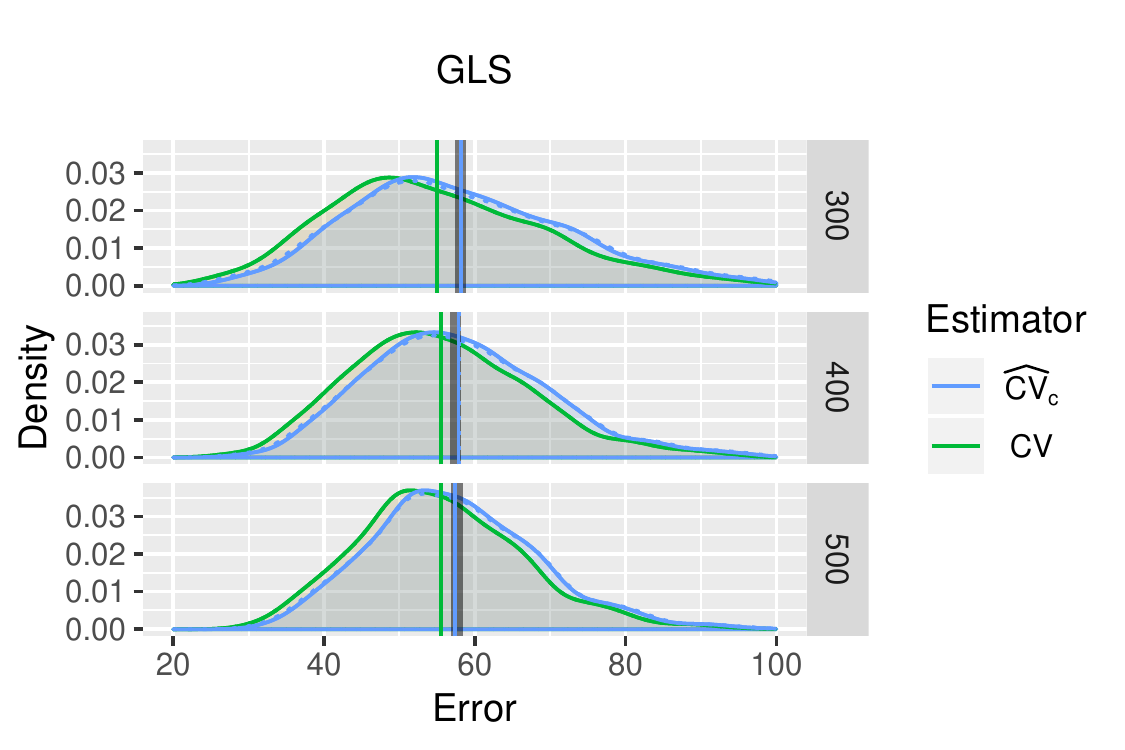}
        \caption{}\label{densityGLS}
    \end{subfigure}
    \caption{(\ref{densityLMM}): Using LMM, each curve refers to the density of the relevant estimator, CV and $\widehat{CV_c}.$ Two scenarios are presented: when the variance parameters are known (solid line) and
when they are estimated (dashed line). Their means are compared to the generalization error (in black). (\ref{densityGLS}): Similarly, the analysis is presented when GLS is implemented instead of LMM.}\label{DensityLMMGLS}
\end{figure}
% \begin{figure}[t]
%     \centering
%     \begin{subfigure}[t]{0.5\textwidth}
%         \centering
%         \includegraphics[width=1\linewidth]{Figures/density300400500 _LMM 2020-02-20 .pdf}
%         \caption{}\label{RateLMM}
%     \end{subfigure}%
%     ~ 
%     \begin{subfigure}[t]{0.5\textwidth}
%         \centering
%         \includegraphics[width=1\linewidth]{Figures/density300400500 _GLS 2020-02-20 .pdf}
%         \caption{}\label{RateGLS}
%     \end{subfigure}
%     \caption{(\ref{RateLMM}): 
% Using LMM, each bar refers to the agreement
% rate of the relevant criterion, $\underset{h\in \mathcal{H}}{\argmin}\;\widehat{CV_c}(h)$ and $\underset{h\in \mathcal{H}}{\argmin}\;CV(h),$ with $\underset{h\in \mathcal{H}}{\argmin}\;generaliztion\;error(h).$ Also, two scenarios are presented:
% when the variance parameters are known (in dark color) and when they are estimated (in light color). (\ref{RateGLS}): Similarly, this analysis is presented when GLS is implemented instead of LMM.}\label{RateLMMGLS}
% \end{figure}

As can be seen in Figure \ref{DensityLMMGLS}, $\widehat{CV_c}$ estimates the generalization error unbiasedly, while CV underestimates it. The CV bias in Figure \ref{densityGLS} is relatively small with respect to the biases in Figure \ref{densityLMM} and Figure \ref{densityall}:
\begin{itemize}
    \item The reason that the bias in Figure \ref{densityGLS} is smaller than the bias in Figure \ref{densityall} although in both settings GLS is implemented, is that the deviation of 
$P_{y_{te}|\xbold_{te},T_{tr}}$ from $P_{y_{k}|\xbold_{k},T_{-k}}$ is smaller in the setting of Figure \ref{densityGLS}, as expressed in the corrections --- $2/n\times\tr\big(H_{cv}\Cov(\ybold,\ybold|u_1,u_2...,u_I)\big)$ in the setting of Figure \ref{densityGLS} and $2/n\times\tr\big(H_{cv}\Cov(\ybold,\ybold)\big)$ in the setting of Figure \ref{densityall}. 
\item The reason that the bias in Figure \ref{densityGLS} is smaller than the bias in Figure \ref{densityLMM} although the deviation of 
$P_{y_{te}|\xbold_{te},T_{tr}}$ from $P_{y_{k}|\xbold_{k},T_{-k}}$ is the same in both settings is that in Figure \ref{densityLMM} LMM is implemented and therefore the correlation setting is utilized more than in Figure \ref{densityGLS}. While in GLS the realization of $\bbold$ affects the fixed effects estimates (see Section \ref{Interpretation of the Results}), in LMM the realization of $\bbold$ affects both, the fixed effects estimates and the estimates of $u_1,...,u_{I}.$
\end{itemize}

For demonstrating the model selection performance, Figure \ref{RateLMMGLS} presents the agreement
rates of $\widehat{CV_c}$ and CV with the approximated generalization error, over the repeated simulation runs. The results are presented for different sample sizes and for different predictive models (LMM and GLS). 
\begin{figure}[t]
    \centering
    \begin{subfigure}[t]{0.5\textwidth}
        \centering
        \includegraphics[width=1\linewidth]{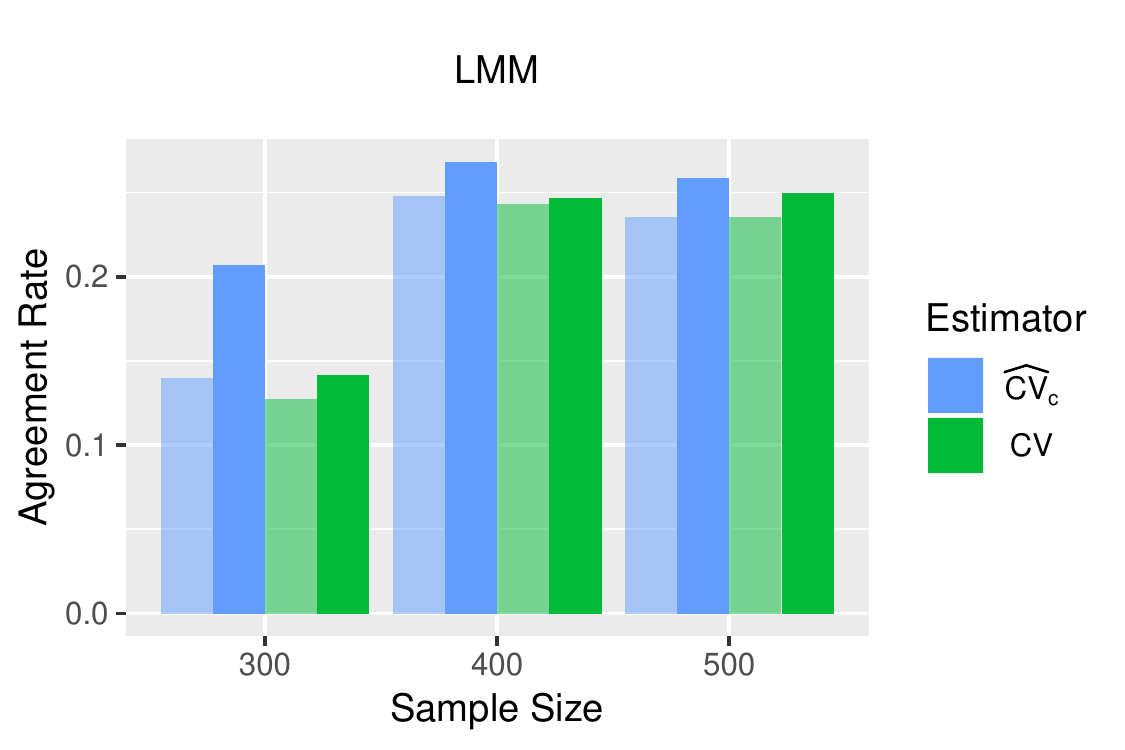}
        \caption{}\label{rateLMM}
    \end{subfigure}%
    ~ 
    \begin{subfigure}[t]{0.5\textwidth}
        \centering
        \includegraphics[width=1\linewidth]{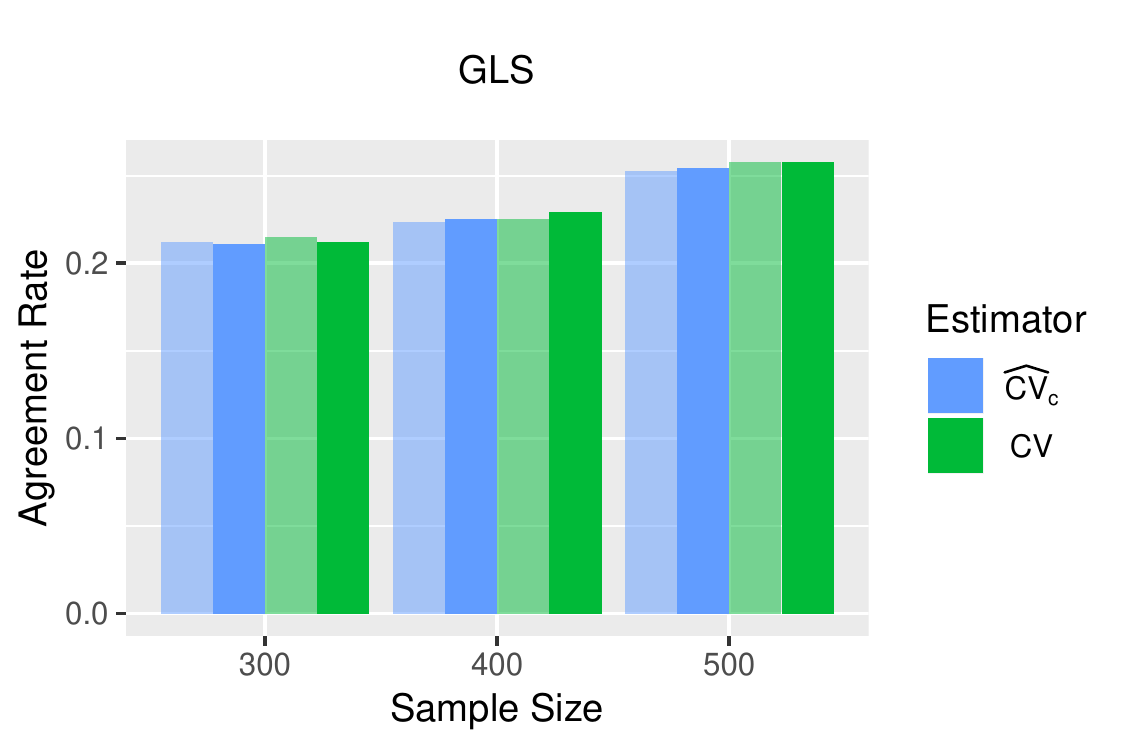}
        \caption{}\label{rateGLS}
    \end{subfigure}
    \caption{(\ref{rateLMM}): Each bar refers to the agreement rate of the relevant criterion, $\underset{h\in \mathcal{H}}{\argmin}\;\widehat{CV_c}(h)$ and $\underset{h\in \mathcal{H}}{\argmin}\;CV(h),$ with $\underset{h\in \mathcal{H}}{\argmin}\;generalization\;error(h)$ for different sample sizes. Also, two scenarios are presented:
when the variance parameters are known (in dark color) and when they are estimated (in light color). (\ref{rateGLS}): Similarly, the analysis is presented when GLS is implemented instead of LMM.}\label{RateLMMGLS}
\end{figure}

% \begin{figure}[t]
%     \centering
%  \includegraphics[width=0.7\linewidth]{Figures/Rate2020-02-23.pdf}
% \caption{Each bar refers to the agreement
% rate of the relevant criterion, $\underset{h\in \mathcal{H}}{\argmin}\;\widehat{CV_c}(h)$ and $\underset{h\in \mathcal{H}}{\argmin}\;CV(h),$ with $\underset{h\in \mathcal{H}}{\argmin}\;generalization\;error(h).$ Also, two scenarios are presented:
% when the variance parameters are known (in dark color) and when they are estimated (in light color).}\label{Rate}
% \end{figure}

As can be seen from Figures \ref{RateLMMGLS}, when LMM is implemented $\underset{h\in \mathcal{H}}{\argmin}\;\widehat{CV_c}(h)$ performs better than $\underset{h\in \mathcal{H}}{\argmin}\;\widehat{CV}(h),$ however when GLS is implemented both criteria perform the same. It is important to note that in order to have a high agreement rate with the oracle, estimating unbiasedly the generalization error, as $\widehat{CV_c}$ does, is not enough since the variance of the estimator can mix the ranks of the models relative to the oracle. Therefore, there is no guarantee that $\underset{h\in \mathcal{H}}{\argmin}\;\widehat{CV_c}(h)$ performs better than $\underset{h\in \mathcal{H}}{\argmin}\;\widehat{CV}(h).$ This is also applies to other model selection criteria that are based on comparing the models' estimated prediction errors. However, in practice, as we see here and in the real dataset examples in the next section, $\underset{h\in \mathcal{H}}{\argmin}\;\widehat{CV_c}(h)$ performs better than $\underset{h\in \mathcal{H}}{\argmin}\;\widehat{CV}(h)$ when the CV bias is large. 

In addition, for LMM, both $\underset{h\in \mathcal{H}}{\argmin}\;\widehat{CV_c}(h)$ and $\underset{h\in \mathcal{H}}{\argmin}\;\widehat{CV}(h)$ with the true variance parameters perform better than the versions with the estimated variance parameters, while for GLS both perform similarly. This can be explained by the key role of the variance in predicting using LMM, compared to predicting using GLS. Also, when the sample size is small ($n=300$), it is more difficult to estimate accurately the variance parameters, and the difference between the two $\underset{h\in \mathcal{H}}{\argmin}\;\widehat{CV_c}(h)$  versions is substantial.

% \textbf{As can be seen from Figures \ref{RateLMMGLS}, the models selection performances of $\widehat{CV_c}$ and CV are similar in this setting. Also, as sample size increases the performances of $\widehat{CV_c}$ version with the estimated variance parameters becomes similar to the version with the true parameters. This is due to better variance parameters estimation.}

% % The reason that the agreement rate between $\underset{h\in \mathcal{H}}{\argmin}\;\widehat{CV_c}(h)$ and the oracle is not $100\%$ is that unlike the oracle that is based on the generalization error, $\underset{h\in \mathcal{H}}{\argmin}\;\widehat{CV_c}(h)$ is based on $\widehat{CV_c},$ which is a random variable.

% \textbf{It is important to stress that although CV is a biased estimator of the generalization error in this scenario (while $\widehat{CV_c}$ is an unbiased estimator), still $\underset{h\in \mathcal{H}}{\argmin}\;CV(h)$ can perform well since CV estimates biasedly \textit{all} the alternative models, and therefore it may preserve the true models rank. However, as we will see in the real data example in Section \ref{BlackFriday}, CV does not always preserve the true models rank.} 

% the agreement rate of a model selection that is based on $\widehat{CV_c}$ is consistently higher than a model selection that is based on CV.

% Additional analyses, demonstrating that model selection using $\widehat{CV_c}$ also gives higher predictive accuracy on average, are presented in the Appendix.

\subsection{Real Data Analysis}\label{real data analyses}
% A dataset of transactions made in a retail store
% on Black Friday is analyzed. The dataset is available at \href{https://www.kaggle.com/mehdidag/black-friday}{Kaggle website}.
This section presents analysis of two real datasets, a dataset with a clustered correlation structure, and a dataset with hierarchical spatial correlation structure.

\subsubsection{Black Friday Dataset --- Clustered Correlation Structure}\label{black friday dataset, example}
The Black Friday dataset\footnote{Presented by \href{https://datahack.analyticsvidhya.com/contest/black-friday/}{Analytics Vidhya}.} contains information of $737,577$ transactions made in a retail store on Black Friday by $5,891$ customers. The median number of transactions by a customer is $53,$ and the range is $[5,1025].$ The dataset is available in \href{https://github.com/AssafRab/CVc}{https://github.com/\\AssafRab/CVc}.

% \textbf{\href{https://www.kaggle.com/mehdidag/black-friday}{Kaggle website}\footnote{https://www.kaggle.com/mehdidag/black-friday}.}

A training sample of all transactions from $100$ random customers was drawn
% ($5,115$ transactions)
and the goal is to fit a predictive model that predicts the purchase amount of a new transaction of a new random customer. Therefore, while in CV setting $\Cov(\ybold_{-k},y_{k})\neq0,$ in the prediction goal setting $\Cov(\ybold_{tr},y_{te})=0.$ Therefore the condition of Theorem \ref{wcv for same s} is not satisfied $(P_{T_{te},T_{tr}}\neq P_{T_{k},T_{-k}})$ and K-fold CV is unsuitable.

Three GLS models were fitted using different covariates (see Table \ref{Model Covariates BF}) with the covariance matrix $\Cov(y_{i,j},y_{i',j'})=\sigma^2_{customer}1_{[i=i']}+\sigma^2_{\epsilon}1_{[i=i',j=j']},$ where $i$ is customer index and $j$ is the index for an observation in a customer's set of observations. $\sigma^2_{customer}$ and $\sigma^2_{\epsilon}$ were estimated using restricted maximum likelihood of normal distribution.

% , such that transactions within a customer are correlated, however between customers are not correlated).

CV and $\widehat{CV_c}$ were calculated using the training sample. Test error was calculated by averaging the prediction error of observations of other random $900$ customers using GLS models that were fitted by $T.$ This procedure was repeated $50$ times with different random training and test sets, and the generalization error was approximated by averaging the test errors of the $50$ runs. 

Figure \ref{BlackFriday} presents the means of CV, $\widehat{CV_c},$ and the test error over the $50$ runs, as well as the confidence intervals of the means.

\begin{figure}[ht!]
\centering
\begin{minipage}{0.6\textwidth}
\centering
     \includegraphics[width=1\linewidth]{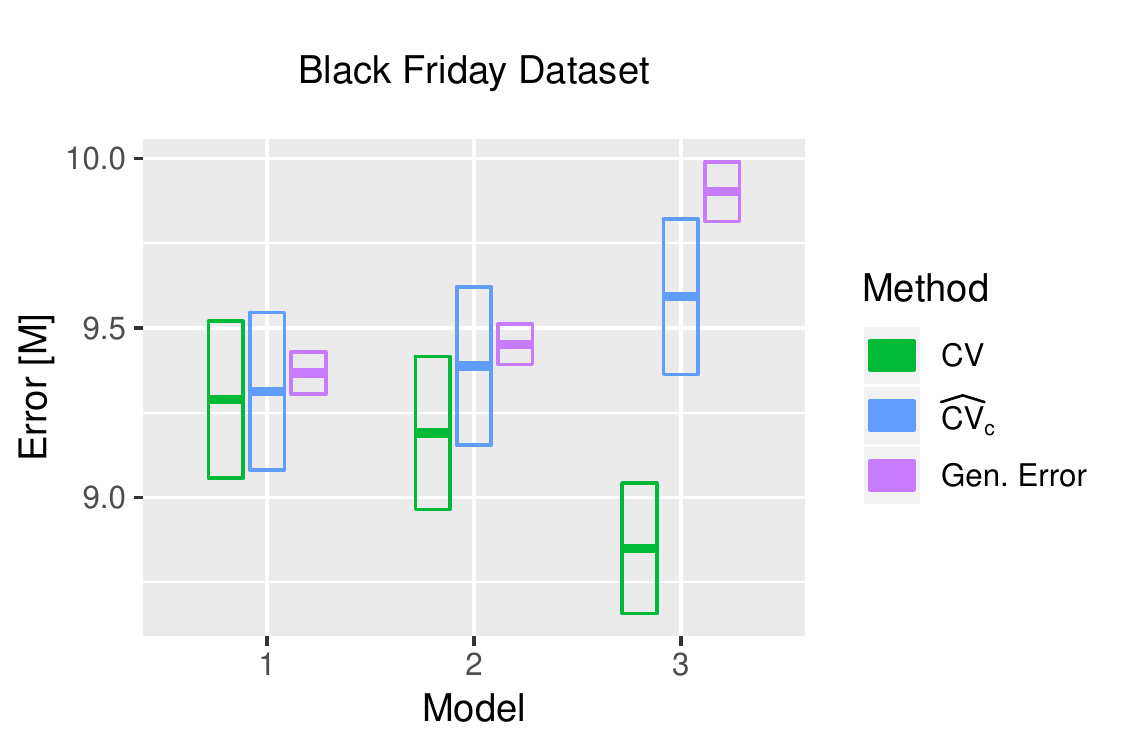}
    \caption{Prediction error estimation. For each model, the purple horizontal line in the box is the approximated generalization error, which was calculated by averaging the test error over the $50$ runs. The  box range is the two standard deviations confidence interval of the mean.
    Similarly, the mean estimated prediction error and the two standard deviations confidence interval of CV and $\widehat{CV_c}$ are presented.}\label{BlackFriday}
\end{minipage}
% \begin{minipage}{0.05\textwidth}
\hspace{1em}
% \end{minipage}
\captionsetup{type=table}
\begin{minipage}{0.3\textwidth}
% \vspace{2.6em}
\resizebox{\textwidth}{!}{
\centering
\captionsetup{type=table} %% tell latex to change to table
\begin{tabular}{|l|c|c|c|}
\hline 
 \multicolumn{1}{|c|}{Covariate\textbackslash{}Model}    & Model 1 & Model 2 & Model 3\tabularnewline
% Covariate\textbackslash{}Model & Model 1 & Model 2 & Model 3\tabularnewline
\hline 
\hline 
Intercept & $\checkmark$ & $\checkmark$ & $\checkmark$\tabularnewline
\hline 
Product Category 1 & $\checkmark$ & $\checkmark$ & $\checkmark$\tabularnewline
\hline 
Marital Status &  & $\checkmark$ & $\checkmark$\tabularnewline
\hline 
Gender &  & $\checkmark$ & $\checkmark$\tabularnewline
\hline 
Age &  & $\checkmark$ & $\checkmark$\tabularnewline
\hline 
Occupation &  &  & $\checkmark$\tabularnewline
\hline 
City Category &  &  & $\checkmark$\tabularnewline
\hline 
Stay in Current City Years &  &  & $\checkmark$\tabularnewline
\hline 
\end{tabular}
}
\vspace{3em}
\caption{Model 1 contains two covariates, model 2 contains five covariates and model 3 contains eight covariates.}\label{Model Covariates BF}
\end{minipage}
\end{figure}

As can be seen in Figure \ref{BlackFriday}, $\widehat{CV_c}$ estimates the generalization error better than CV for all the three models. Also, unlike CV, $\widehat{CV_c}$ selects the same model as the oracle, Model 1, and follows properly the prediction error estimation trend across the models. Analyzing the business aspect of this example, leads to the surprising conclusion that including only the product category as a fixed effect gives the best prediction model, and adding covariates like age and gender does not improve predictive performance, when the data are properly analyzed using $\widehat{CV_c}.$

\subsubsection{California Housing Dataset --- Clustered Random Field Correlation Structure}\label{California housing, example}
Another dataset that was analyzed is 
the California housing dataset, which contains house values and other housing parameters in California. The dataset has a hierarchical spatial correlation structure --- each observation has spatial coordinates value, where some of the observations share the same coordinates value and therefore define a cluster. This correlation structure is similar to the correlation structure in Sections \ref{simulation}, however with spatial data rather than longitudinal data. The dataset is available in the python scikit-learn package and the code is available in \href{https://github.com/AssafRab/CVc}{https://github.com/AssafRab/CVc}.

% The dependent variable is 'median house value in a block' and its assumed covariance function is:
% % \[
% % \Cov(y_{j,i},y_{j,i'})=K_{exp}(\|latitude_{j,i}-latitude_{j',i'}\|, \|longitude_{j,i}-longitude_{j',i'}\|)+\sigma^2\times1_{[j=j',i=i']},
% % \]
% % where $\{j,\;j'\}$ are clusters index, $\{i,i'\}$ are observations index in the clusters $\{j,\;j'\},$ respectively, and $K_{exp}$ is an exponential Kernel.
% \[
% \Cov(y_{i},y_{i'})=K_{exp}(\zbold_{i}-\zbold_{i'})+\sigma^2\times1_{[i=i']},
% \]
% where $\zbold_{i}\in\mathbb{R}^{2}$ contains the latitude and longitude of observation $i$ and $K_{exp}(\cdot)$ is the exponential kernel function. 

% As was mentioned, different observations share the same spatial coordinate values and therefore the correlation structure is hierarchical. 
To emphasize the hierarchical clustering structure, all the clusters containing only a single observation were excluded from the analysis ($8,237$ observations out of  $20,640$), so the analyzed dataset contains $12,403$ observations from $4,353$ clusters. The median number of observations in a cluster is $2$ and the range is $[2,15].$
% ( The cluster's size varies from ?? to ??. 
% The prediction goal is to predict an observation from a \textit{different} cluster than the clusters in the training set $T.$

A training sample, $T,$ of $700$ independent clusters was randomly drawn. The prediction goal is to predict the 'median house value in a block' of a new observation from a \textit{different} cluster than the clusters in $T.$ Three GPR models with different covariates were fitted (see Table \ref{Model Covariates CA}) with the following covariance function:
% \[
% \Cov(y_{j,i},y_{j,i'})=K_{exp}(\|latitude_{j,i}-latitude_{j',i'}\|, \|longitude_{j,i}-longitude_{j',i'}\|)+\sigma^2\times1_{[j=j',i=i']},
% \]
% where $\{j,\;j'\}$ are clusters index, $\{i,i'\}$ are observations index in the clusters $\{j,\;j'\},$ respectively, and $K_{exp}$ is an exponential Kernel.
\[
\Cov(y_{i},y_{i'})=K_{exp}(\zbold_{i}-\zbold_{i'})+\sigma^2\times1_{[i=i']},
\]
where $\zbold_{i}\in\mathbb{R}^{2}$ contains the latitude and longitude of observation $i$ and $K_{exp}(\cdot)$ is the exponential kernel function. The parameters of $K_{exp}(\cdot)$ and $\sigma^2$ were fitted using maximum likelihood of normal distribution. 

% The unit of the dependent variable is in $100,000\$.$ \textbf{(I added it due to the scale of the errors in the figure. Alternatively, we can add it as a footnote or in the caption, or to remove it. What do you think?)}

Since the prediction goal is to predict a new observation from a different cluster than the clusters in $T,$ then $\Cov(\ybold_{-k},y_{k})\neq\Cov(\ybold_{-k},y_{te}).$ Therefore the condition of Theorem \ref{wcv for same s} is not satisfied and standard K-fold CV is unsuitable. It is important to note that unlike the Black Friday example, here $\Cov(\ybold_{-k},y_{te})\neq0$ although $y_{te}$ is from a new cluster. This is due to the spatial correlation structure of the dataset, i.e., the latent random function $\sbold.$ The $\widehat{CV_c}$ correction in this case is $
2/n\times\tr\big(H_{cv}\Cov(\ybold,\ybold|\sbold)\big),$ where $\Cov(\ybold,\ybold|\sbold)$, which expresses the extra correlation that the clusters contribute over the spatial correlation of the data (see Section \ref{Specifying the correction}), is estimated as follows:
\[
\widehat{\Cov}(\ybold,\ybold|\sbold)[i,i']=
\begin{cases}
\begin{array}{c}
K_{exp}(0)-\frac{1}{|j,j'\; \forall\zbold_{j}\neq \zbold_{j'}|}\sum_{j,j'\;\forall\zbold_{j}\neq \zbold_{j'}}K_{exp}(\zbold_{j}-\zbold_{j'})\\
0
\end{array} & \begin{array}{c}
\zbold_{i}=\zbold_{i'}\\
\zbold_{i}\neq \zbold_{i'}\\
\end{array}\end{cases}.
\]
% In such a way $C$ expresses the extra correlation that the clusters contribute over the spatial correlation of the data.

The analysis was repeated $50$ times with different random $T,$ at each run CV and $\widehat{CV_c}$ were calculated and the test error was calculated by averaging the prediction error of the remaining observations of the $3,653$ clusters that are not in $T$ using GPR models that were fitted by $T.$ Averaging the test error over the $50$ runs approximates the generalization error.

Figure \ref{California housing} presents the means of $CV,$ and $\widehat{CV_c},$ and the test error over the $50$ runs, as well as the confidence intervals of the means.

\begin{figure}[ht!]
\centering
\begin{minipage}{0.6\textwidth}
\centering
     \includegraphics[width=1\linewidth]{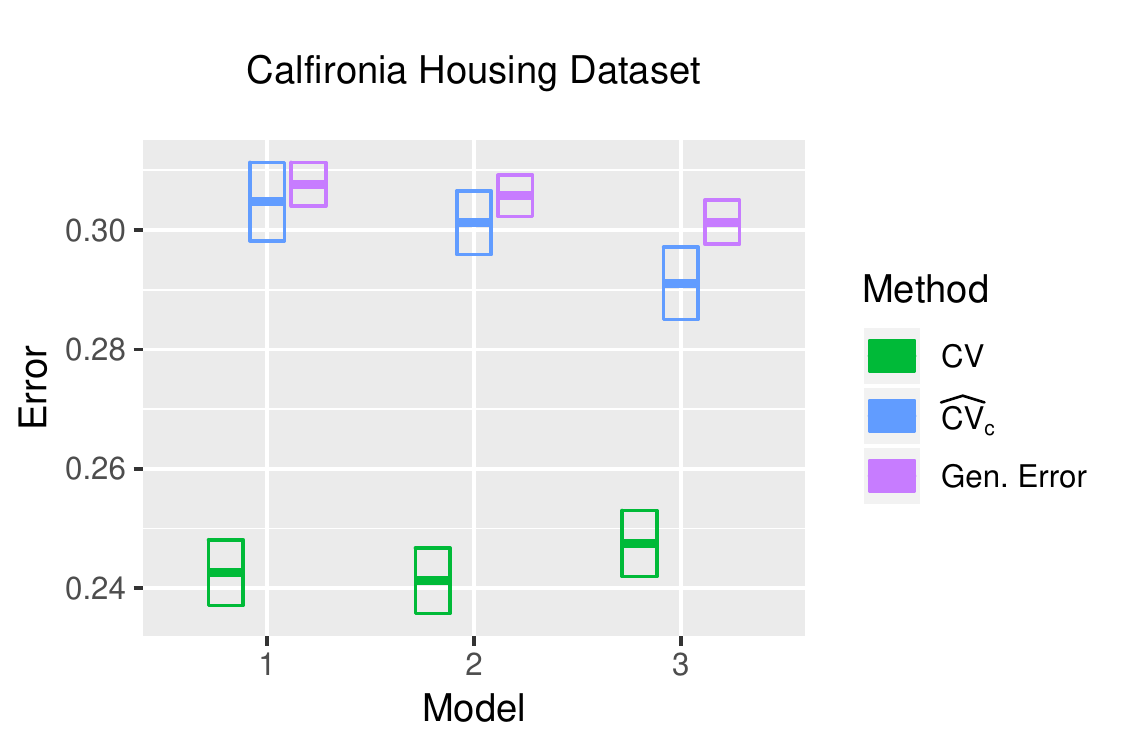}
    \caption{Prediction error estimation. For each model, the purple horizontal line in the box is the approximated generalization error, which was calculated by averaging the test error over the $50$ runs. The  box range is the two standard deviations confidence interval of the mean.
    Similarly, the mean estimated prediction error and the two standard deviations confidence interval of CV and $\widehat{CV_c}$ are presented. The unit of the dependent variable is in $\$100,000.$}\label{California housing}
    % The bold horizontal line in each box is the mean prediction error of the relevant method and model. The range of the box is the two standard deviations confidence interval of the mean prediction error.
\end{minipage}
\hspace{1em}
\captionsetup{type=table}
\begin{minipage}{0.3\textwidth}
% \vspace{2.1em}
  \resizebox{\textwidth}{!}{
\centering
\captionsetup{type=table} %% tell latex to change to table
\begin{tabular}{|l|c|c|c|}
\hline 
 \multicolumn{1}{|c|}{Covariate\textbackslash{}Model}    & Model 1 & Model 2 & Model 3\tabularnewline
% Covariate\textbackslash{}Model & Model 1 & Model 2 & Model 3\tabularnewline
\hline 
\hline 
Intercept & $\checkmark$ & $\checkmark$ & $\checkmark$\tabularnewline
\hline 
Median income in block  & $\checkmark$ & $\checkmark$ & $\checkmark$\tabularnewline
\hline 
Median house age in block &  $\checkmark$ & $\checkmark$& $\checkmark$\tabularnewline
\hline 
Average number of rooms  &  & $\checkmark$ & $\checkmark$\tabularnewline
\hline 
Average number of bedrooms &  & $\checkmark$ & $\checkmark$\tabularnewline
\hline 
Block population &  &  & $\checkmark$\tabularnewline
\hline 
Average house occupancy &  &  & $\checkmark$\tabularnewline
\hline 
\end{tabular}
}
\vspace{2.2em}
\caption{Model 1 contains three covariates, model 2 contains five covariates and model 3 contains seven covariates.}\label{Model Covariates CA}
\end{minipage}
\end{figure}

Similarly to Figure \ref{BlackFriday}, $\widehat{CV_c}$ estimates the generalization error better than CV for all the models, selects the same model as the oracle (Model 3), and follows properly the prediction error estimation trend across the models. Also, while in Figure \ref{BlackFriday} CV underestimates the prediction error of the saturated model more than the smaller models, here CV underestimates the saturated model less than the smaller models. Therefore, biasedness of CV prediction error can be expressed in model selection in different ways --- sometimes by favoring over-parameterized models and sometimes by favoring under-parameterized models.

\section{Conclusion and Discussion}

In this paper we tackle the problem of applying CV as an estimate of generalization error in non-i.i.d. situations.
% with correlation structure due to random effects or other latent variables, \textbf{when the correlation structures in training and future prediction do not necessarily match. (Internal: I'm not sure this reflects the idea of $P_{T_{k},T_{-k}}\neq P_{T_{tr},T_{te}}$)}
While the fundamental concerns that this presents are widely acknowledged, a clear understanding of when adjustments are needed, and what adjustments are appropriate, seems lacking in much of the literature \citep{roberts2017cross,saeb2017need,anderson2018comparing}. 

We first present a general formulation of the bias in using CV in presence of correlations,  
which leads to a clear general definition of settings where no correction to CV is needed (Theorem \ref{wcv for same s}). It shows that non-i.i.d. situations can still facilitate correctness of regular CV, as long as the dependence structure between training and prediction points is consistent in CV and actual prediction. This simple result appears to contradict some previous claims in the literature. An example can be found in \cite{roberts2017cross}, where mechanisms of controlling folds partitioning are suggested for Kriging tasks, based on the conception that CV is always over-optimistic for non-i.i.d. data.

We then present a derivation of a bias correction for linear models under general correlation structures (Theorem \ref{EDF Theorem}), which is used in establishing a bias corrected cross-validation version, $\widehat{CV_c}$. 
To implement our correction, it is necessary to specify the covariance structures within the training set and between the training and prediction sets. This is typically also required to choose a modeling approach. For example, in a simple linear mixed model with normal assumptions, if one assumes that the random effect realizations are the same when predicting, then LMM prediction is appropriate, while if random effect realizations are new, then using GLS  for estimation of fixed effects only for prediction is more appropriate \citep{verbeke1997linear}.  However, it is important to emphasize that the validity of the correction $\widehat{CV_c}$ does not depend on selection of an appropriate modeling approach. In other words, if one mistakenly uses GLS where LMM is appropriate, $\widehat{CV_c}$ still gives an unbiased estimate of generalization error for the resulting model, as long as the covariance structure is correctly specified.

In practice, the covariance matrices are typically not fully known, but partially estimated from the data (for example, variance parameters in LMM can be estimated using restricted maximum likelihood, \cite{verbeke1997linear}), and this is also required for applying $\widehat{CV_c}$ to correct CV results. 
This could potentially add uncertainty to the estimates, but as demonstrated in Section \ref{numerical part}, it does not tend to affect their expectation.

A fundamental assumption that is taken throughout this paper is that the marginal distributions of the training and prediction sets are the same. In case the marginal distributions are different, i.e., the observations in the training set are drawn from a different marginal distribution than the prediction data, then $\widehat{CV_c}$ is unsuitable. This scenario requires further research and relevant applications are given in Sections \ref{Kriging} and \ref{longitudinal data}. Other important use cases of different marginal distributions for training and prediction sets are discussed in \cite{rabinowicz2018assessing} and solutions for these use cases are proposed in context of transductive prediction error (rather than generalization error approach taken here).

An interesting situation also covered by $\widehat{CV_c}$ and not often discussed, is when the training set contains i.i.d. observations, but new data points where predictions are made are actually correlated with the training set (for example, new observations in the same set of cities). In this case, if the correlation structure is known, then $\widehat{CV_c}$ can still be used to estimate the prediction error.

\bibliography{Reference}
\bibliographystyle{JASA}

\end{document}